%% file: arxiv-paper.tex
\PassOptionsToPackage{table}{xcolor}

\documentclass[acmsmall,screen,nonacm]{acmart}

\AtBeginDocument{%
  \providecommand\BibTeX{{%
    \normalfont B\kern-0.5em{\scshape i\kern-0.25em b}\kern-0.8em\TeX}}}

\setcopyright{rightsretained}
\copyrightyear{2026}

\input{macros}

\arxivtrue

\begin{document}

\title{\flowbook: Enforcing Reproducibility in Computational Notebooks}

\author{Stephen N. Freund}
\email{freund@cs.williams.edu}
\orcid{0009-0000-6992-199X}
\affiliation{%
  \institution{Williams College}
  \country{Williamstown, MA, USA}
}

\author{Emery D. Berger}
\authornote{Work done at the University of Massachusetts Amherst.}
\email{emery@cs.umass.edu}
\orcid{0000-0002-3222-3271}
\affiliation{%
  \institution{University of Massachusetts Amherst}
  \country{USA}
}
\affiliation{%
  \institution{Amazon Web Services}
  \country{USA}
}

\author{Cormac Flanagan}
\email{cormac@ucsc.edu}
\orcid{0009-0009-5067-6774}
\affiliation{%
  \institution{University of California, Santa Cruz}
  \country{Santa Cruz, CA, USA}
}

\author{Eunice Jun}
\email{emjun@cs.ucla.edu}
\orcid{0000-0002-4050-4284}
\affiliation{%
  \institution{University of California, Los Angeles}
  \country{Los Angeles, CA, USA}
}

\renewcommand{\shortauthors}{Flanagan, Freund, Berger, and Jun}

\input{abstract}

\maketitle

\input{introduction}
\input{example}
\input{semantics-col}

\input{analysis-col}

\input{implementation}

\input{evaluation}

\input{related}
\input{conclusion}

\input{acknowledgements}

\bibliographystyle{ACM-Reference-Format}
\bibliography{main,steve,emery,emjun}

\input{appendix}

\end{document}

%% file: macros.tex
\newif\ifarxiv
\arxivfalse

\usepackage[normalem]{ulem} % Use [normalem] to keep \emph as italics
\usepackage{fontawesome5}
\usepackage{xspace}
\usepackage{xcolor}
\usepackage{listings}
\usepackage{graphicx}
\usepackage{microtype}
\usepackage{multirow}
\usepackage{array}
\usepackage{enumitem}
\usepackage{mathtools}
\usepackage{tabularx}
\usepackage{tcolorbox}
\tcbuselibrary{skins}
\input{jupyter-macros.tex}
\usepackage{algorithm}
\usepackage{algpseudocode}
\usepackage{wrapfig}

\usepackage{amsmath}
\usepackage{booktabs}
\usepackage{mathpartir}
\usepackage{subcaption}

\newcommand{\punt}[1]{}

\newcommand{\relsize}{\small}
\newcommand{\relname}[1]{\raisebox{0ex}[0ex][0ex]{\mbox{\relsize \textsc{[#1]}}}}
\newcommand{\myheadrule}[1]{\fbox{\ensuremath{#1}}\vspace{1mm}}

\newcommand{\defeq}{\stackrel{\text{def}}{=}}

\newcommand{\nrln}[3]{
\mbox{
$\begin{array}{@{}l@{}}
\relname{#1} \\
~~\frac
    {\strut\begin{array}{c} #2 \end{array}}
    {\strut\begin{array}{c} #3 \end{array}}
\end{array}
$}}

\newcommand{\Loc}{\mathrm{Loc}}
\newcommand{\Val}{\mathrm{Val}}

\newcommand{\Edit}{\textsc{Edit}}
\newcommand{\Run}{\textsc{Run}}
\newcommand{\Insert}{\textsc{Insert}}
\newcommand{\Delete}{\textsc{Delete}}
\newcommand{\Move}{\textsc{Move}}

\newcommand{\cat}{,\,}

\newcommand{\StdEvalCell}[4]{#1;\, #2 \,\downarrow\, #3 \cdot #4}
\newcommand{\StdEvalNotebook}[3]{#1 \,\downarrow\, #2 \cdot #3}

\newcommand{\InstEvalCell}[6]{#1;\, #2 \,\Downarrow\, #3 \cdot #4 \cdot #5 \cdot #6}

\newcommand{\clean}{\textsc{clean}}
\newcommand{\stale}{\textsc{stale}}

\newcommand{\NoReadAndWrite}{\mbox{\textsc{NoReadAndWrite}}}
\newcommand{\WriteBeforeRead}{\mbox{\textsc{WriteAboveRead}}}
\newcommand{\NoReadBeforeWrite}{\mbox{\textsc{NoReadAboveWrite}}}
\newcommand{\NoWriteAfterRead}{\mbox{\textsc{NoWriteBelowRead}}}
\newcommand{\ReadsResidualWrite}{\mbox{\textsc{ReadsResidualWrite}}}
\newcommand{\LastWriter}{\mbox{\textsc{LastWriter}}}
\newcommand{\OverWritten}{\textsc{OverWritten}}

\newcommand{\ForwardStale}{\textsc{ForwardStale}}
\newcommand{\BackwardStale}{\textsc{BackwardStale}}

\newcommand{\flowbook}{\textsc{FlowBook}\xspace}

\newcommand{\myparagraph}[1]{\par\smallskip\noindent\textbf{#1.}}

\def\<#1>{\codeid{#1}}
\newcommand{\codeid}[1]{\ifmmode{\mbox{\small\ttfamily{#1}}}\else{\small\ttfamily #1}\fi}

%% file: jupyter-macros.tex
\usepackage{xcolor}
\usepackage{listings}
\usepackage{framed}
\usepackage{ifthen}
\usepackage{pifont}

\makeatletter   % allow @ in internal command names

\definecolor{nbcellbg}  {RGB}{242, 242, 242}  % normal cell: light gray
\definecolor{nbcellwarn}{RGB}{255, 250, 230}  % stale cell: deeper yellow
\definecolor{nbcellerror}{RGB}{255, 240, 240}  % stale cell: deeper yellow
\definecolor{nblabel}   {RGB}{100, 100, 100}  % annotation text (darker)
\definecolor{nbprompt}  {RGB}{  0,  87, 174}  % Jupyter blue [n]:
\definecolor{pykw}      {RGB}{  0, 128,   0}  % keywords
\definecolor{pystr}     {RGB}{186,  33,  33}  % strings
\definecolor{pycmt}     {RGB}{ 60, 145, 230}  % comments
\definecolor{nberrbg}   {RGB}{255, 205, 205}  % error box background
\definecolor{nberrfg}   {RGB}{180,   0,   0}  % error icon/border
\definecolor{nbwarnbg}  {RGB}{255, 245, 225}  % warning box background
\definecolor{nbwarnfg}  {RGB}{200, 150,  100}  % warning icon/border
\definecolor{nbframeborder}{RGB}{190, 190, 190}  % notebook border
\definecolor{nbframetitle} {RGB}{0, 0, 0}  % notebook title (black)
\definecolor{nbannbg}      {RGB}{255, 255, 255}  % annotation background (white)
\definecolor{nbcellborder} {RGB}{220, 220, 220}  % light cell border
\definecolor{nbcode}       {RGB}{ 40,  40,  40}  % code text (slight gray)

\newcommand{\nb@errico} {\textcolor{nberrfg} {\ding{55}}}   % bold X
\newcommand{\nb@warnico}{\textcolor{nbwarnfg}{\ding{115}}}  % triangle

\lstdefinestyle{nbpython}{
  language         = Python,
  basicstyle       = \ttfamily\footnotesize\color{nbcode},
  keywordstyle     = \color{pykw},
  stringstyle      = \color{pystr},
  commentstyle     = \color{pycmt}\itshape,
  showstringspaces = false,
  breaklines       = true,
  tabsize          = 4,
  columns          = flexible,
  keepspaces       = true,
  aboveskip        = 0pt,
  belowskip        = 0pt,
  moredelim=[is][\sout]{|}{|},
}

\newlength{\nbpromptw}    
\newlength{\nbframepad}   
\newlength{\nbrightmargin}
\newlength{\nb@annw}      % scratch length for annotation column width

\newenvironment{nb@codebox}[4]{%
  \par\noindent
  \begin{minipage}[t]{\nbpromptw}%
    \vspace{3pt}%
    \ifthenelse{\equal{#3}{*}}{}{%
      \textcolor{nbprompt}{\texttt{\small[{#3}]:}}%
    }%
  \end{minipage}%
  \def\nb@ann{#4}%
  \begin{minipage}[t]{\dimexpr#2-\nbpromptw-\nbrightmargin\relax}%
    \setlength{\fboxsep}{0pt}%
    \setlength{\fboxrule}{0.4pt}%
    \MakeFramed{\advance\hsize-\width \FrameRestore}%
      \vspace{2pt}%
      \lstset{style=nbpython,xleftmargin=4pt,xrightmargin=1em}%
}{%
      \ifthenelse{\equal{\nb@ann}{}}{}{%
        \vspace{-\baselineskip}%
        \par\noindent\hfill
        {\setlength{\fboxsep}{0.5pt}%
          \colorbox{nbannbg}{\strut\hspace{0.15em}\textcolor{nblabel}{\sffamily\small\nb@ann}\hspace{0.15em}}%
        }~\hfill\par%
      }%
      \vspace{2pt}%
    \endMakeFramed
  \end{minipage}%
  \par\vspace{2pt}%
}

\newenvironment{nb@codeboxbar}[4]{%
  \par\noindent
  \begin{minipage}[t]{\nbpromptw}%
    \vspace{3pt}%
    \ifthenelse{\equal{#3}{*}}{}{%
      \textcolor{nbprompt}{\texttt{\small[{#3}]:}}%
    }%
  \end{minipage}%
  \def\nb@ann{#4}%
  \begin{minipage}[t]{\dimexpr#2-\nbpromptw-\nbrightmargin\relax}%
    \setlength{\fboxsep}{0pt}%
    \setlength{\fboxrule}{0.4pt}%
    \MakeFramed{\advance\hsize-\width \FrameRestore}%
      \vspace{2pt}%
      \lstset{style=nbpython,xleftmargin=4pt,xrightmargin=1em}%
}{%
      \ifthenelse{\equal{\nb@ann}{}}{}{%
        \vspace{-\baselineskip}%
        \par\noindent\hfill
        {\setlength{\fboxsep}{0.5pt}%
          \colorbox{nbannbg}{\strut\hspace{0.15em}\textcolor{nblabel}{\sffamily\small\nb@ann}\hspace{0.15em}}%
        }~\hfill\par%
      }%
      \vspace{2pt}%
    \endMakeFramed
  \end{minipage}%
  \par\vspace{2pt}%
}

\newenvironment{nb@msgbox}[4]{%
  \def\nb@msgicon{#3}%
  \par\noindent
  \hspace{\nbpromptw}%  align with cell content (after [n]: prompt)
  \begin{minipage}[t]{\dimexpr#4-\nbpromptw-\nbrightmargin\relax}%
    \colorlet{shadecolor}{#1}%
    \setlength{\fboxsep}{3pt}%
    \setlength{\fboxrule}{0.4pt}%
    \MakeFramed{\advance\hsize-\width \FrameRestore}%
    \noindent\sffamily\small\ignorespaces\nb@msgicon
}{%
    \endMakeFramed
  \end{minipage}%
  \par\vspace{2pt}%
}

\makeatother    % restore @ catcode

\newenvironment{nbnotebook}[2][\linewidth]{%
  \par\noindent
  \begin{minipage}{#1}%
    \setlength{\FrameRule}{0.4pt}%
    \setlength{\FrameSep}{4pt}%
    \MakeFramed{\advance\hsize-\width\FrameRestore}%
    \ifthenelse{\equal{#2}{}}{}{%
      \begin{center}\textcolor{nbframetitle}{\sffamily\small #2}\end{center}%
      \vspace{-6pt}%
    }%
}{%
    \vspace{2pt}%
    \endMakeFramed
  \end{minipage}%
  \par\smallskip
}

\newenvironment{nbcode}[3][\linewidth]{%
  \begin{nb@codebox}{nbcellbg}{#1}{#2}{#3}%
}{%
  \end{nb@codebox}%
}

\newenvironment{nbcodewarn}[3][\linewidth]{%
  \begin{nb@codeboxbar}{warningyellow}{#1}{#2}{#3}%
}{%
  \end{nb@codeboxbar}%
}

\newenvironment{nbcodeerror}[3][\linewidth]{%
  \begin{nb@codeboxbar}{nberrfg}{#1}{#2}{#3}%
}{%
  \end{nb@codeboxbar}%
}

\newenvironment{nboutput}[2][\linewidth]{%
  \par\noindent
  \hspace{\dimexpr\nbpromptw+\nbframepad\relax}%
  \begin{minipage}[t]{\dimexpr#1-\nbpromptw-\nbframepad-\nbrightmargin\relax}%
    \vspace{1pt}%
    \ttfamily\small
}{%
  \end{minipage}%
  \par\vspace{2pt}%
}

\newenvironment{nberror}[1][\linewidth]{%
  \begin{nb@msgbox}{nberrbg}{nberrfg}{\textcolor{red}{\faTimes}\ }{#1}%
}{%
  \end{nb@msgbox}%
}

\definecolor{warningyellow}{RGB}{255,192,0}

\newenvironment{nbwarning}[1][\linewidth]{%
  \begin{nb@msgbox}{nbcellwarn}{nbwarnfg}{\textcolor{warningyellow}{\faExclamationTriangle\ \ }}{#1}%
}{%
  \end{nb@msgbox}%
}

\newenvironment{nbghost}[3][\linewidth]{%
  \par\noindent
  \begin{minipage}[t]{\nbpromptw}%
    \vspace{3pt}%
    \ifthenelse{\equal{#2}{*}}{}{%
      \textcolor{gray!50}{\texttt{\small[{#2}]:}}%
    }%
  \end{minipage}%
  \def\nb@ann{#3}%
  \begin{minipage}[t]{\dimexpr#1-\nbpromptw-\nbrightmargin\relax}%
    \setlength{\fboxsep}{0pt}%
    \setlength{\fboxrule}{0.4pt}%
    \MakeFramed{\advance\hsize-\width \FrameRestore}%
      \vspace{2pt}%
      \lstset{style=nbpython,xleftmargin=4pt,xrightmargin=1em,%
        basicstyle=\ttfamily\footnotesize\color{white},%
        keywordstyle=\color{white},%
        stringstyle=\color{white},%
        commentstyle=\color{white}}%
      \color{white}%
}{%
      \ifthenelse{\equal{\nb@ann}{}}{}{%
        \vspace{-\baselineskip}%
        \par\noindent\hfill
        {\setlength{\fboxsep}{0.5pt}%
          \colorbox{nbannbg}{\strut\hspace{0.15em}\textcolor{gray!50}{\sffamily\small\nb@ann}\hspace{0.15em}}%
        }~\hfill\par%
      }%
      \vspace{2pt}%
    \endMakeFramed
  \end{minipage}%
  \par\vspace{2pt}%
}

\newcommand{\nbvdots}{%
  \par\noindent
  \hspace{\nbpromptw}%
  \hspace{2em}%
  $\vdots$%
  \par\vspace{2pt}%
}

\newcommand{\cell}[1]{\textsf{@#1}}

%% file: abstract.tex
\begin{abstract}
Computational notebooks are notoriously prone to reproducibility
failures. By permitting out-of-order cell execution, notebooks
accumulate hidden state and implicit dependencies that cause
interactive executions to silently diverge from clean top-to-bottom
runs. Prior approaches either employ dependency analyses or enforce
reactive dataflow models that face fundamental tradeoffs among
expressiveness, precision, and performance. This paper exploits the
insight that reproducibility can be enforced without precise
dependency tracking: a notebook is reproducible if and only if
executing its cells in top-to-bottom order from an empty store
produces exactly the outputs currently recorded. We formalize this
notion of reproducibility and present \flowbook, which implements a dynamic analysis
that enforces reproducibility by tracking read and write sets at cell
boundaries. \flowbook{} detects stale cells whose recorded outputs may
no longer reflect the current notebook state and prevents operations
that would violate reproducibility. \flowbook{} incurs near-imperceptible
latency overhead (median: 70\,ms).
\end{abstract}

%% file: introduction.tex
\begin{wrapfigure}{R}{0.4\textwidth}
  \vspace{-2ex}
  \includegraphics[width=0.4\textwidth]{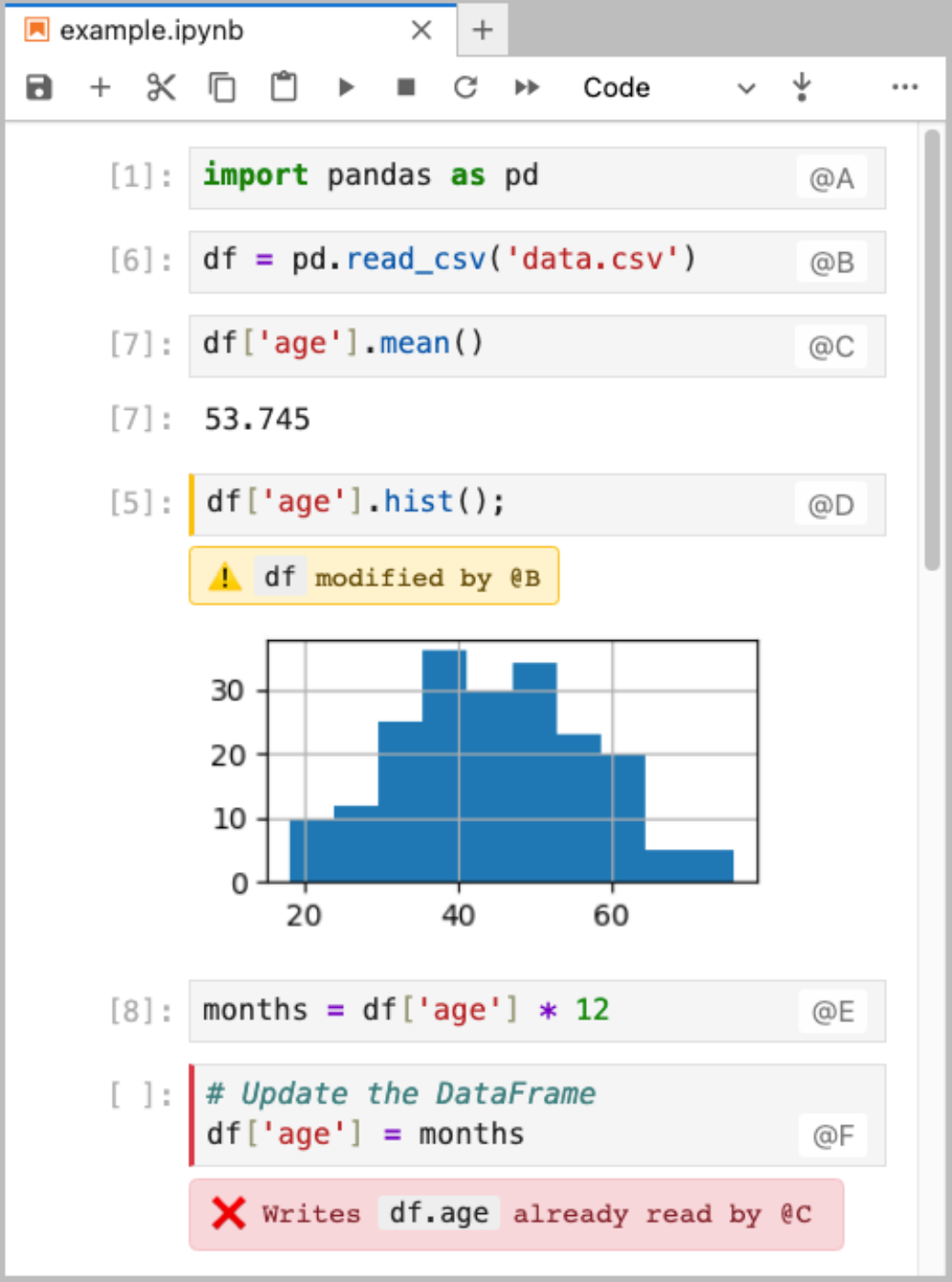}
  \caption{\label{fig:ui}\flowbook{} in action. 
  Cell \cell{D} is stale (yellow): The user ran \cell{B} and \cell{C} after last running \cell{D}. 
  Cell \cell{B} reloaded \texttt{data.csv}, so \cell{D} must be rerun.   
  When the user runs cell \cell{F}, \flowbook{} issues an error (red): Writing \texttt{df["age"]} after running \cell{C} violates
  rerun consistency, since running \cell{C} again would not lead to output consistent with a top-to-bottom execution.
  }
  \vspace{-2ex}  
\end{wrapfigure}

\section{Introduction}

Computational notebooks are notoriously prone to reproducibility
failures. Their interactive execution model permits mutable shared
state, non-linear cell execution, and implicit dependencies between
cells, allowing analyses to appear correct while depending on hidden
state or execution
order~\cite{DBLP:conf/chi/ChattopadhyayPH20,DBLP:conf/chi/KeryRAJM18,DBLP:conf/ACMdis/DrososG22,DBLP:conf/chi/RuleTH18}.
We define a notebook as \emph{reproducible} if and only if executing its cells
top-to-bottom from an empty store produces the outputs currently
recorded in the notebook. When this property fails to hold, scientific
conclusions drawn from notebook-based analyses may be invalid.

Non-linear execution is essential to effective data science workflows.
Analysts iteratively explore data, refine transformations, and adjust
visualizations without rerunning entire notebooks, which is often infeasible
for large datasets or long-running computations. Some prior approaches to notebook reproducibility employ static or
dynamic dependency analysis to detect and/or re-execute stale cells,
but these techniques are limited by incomplete dependency tracking for
external libraries and other sources of imprecision, as well as
performance
limitations~\cite{DBLP:conf/chi/HeadHBDD19,DBLP:journals/pvldb/ShankarMCHP22}.
An alternative approach enforces a reactive dataflow execution model,
automatically re-executing cells based on inferred
dependencies~\cite{marimo2026,plutojl2026}. These systems provide
stronger ordering guarantees, but they either integrate poorly with
Python's scientific ecosystem or fail to track dependencies precisely,
leading to missed dependencies, incorrect results, or unnecessary
re-execution.

This paper exploits the insight that reproducibility can be enforced
without precise dependency inference for all memory
locations. Instead, we track at run time whether each cell's recorded
execution remains valid with respect to the current store. The key
property is \emph{rerun consistency}: a cell is rerun consistent if
re-executing it from the current store produces the same output it
originally produced, regardless of which cells below have
executed in the meantime. A cell is \emph{clean} if it has been
executed and is rerun consistent; it is \emph{stale} if its inputs
may have changed since execution. When all cells are clean, the
notebook is reproducible---that is, its visible outputs match those of
a fresh top-to-bottom execution.

In order for a cell to be rerun consistent, the memory locations it
reads must not be modified by later cells. We propose rerun
consistency as a helpful programming methodology for taming the
arbitrary mutation, arbitrary execution order, and chaotic nature of
computational notebooks.  Note that arbitrary mutation can be
performed within each cell.  It is only when a variable or data
structure is exported from one cell to be read by a later cell that
that variable should become immutable.  Experimental results show that
while the majority of inspected notebooks contain violations of rerun
consistency, all of these violations admit straightforward local fixes
that are mostly automatable.

We present \flowbook{}, a Jupyter plugin for Python notebooks that
guarantees reproducibility via a light-weight dynamic analysis. The
analysis tracks each cell's reads and writes, detects when recorded
outputs become inconsistent with the current state, and prevents
executions that would violate reproducibility. \flowbook{} implements
this analysis using lightweight interposition on the Python kernel's
global namespace and checkpointing at cell boundaries. Checkpointing
exploits common properties of notebook workloads, most notably that
state resides in pandas DataFrames for which we can leverage
copy-on-write semantics~\cite{pandascow}.  As shown in
Figure~\ref{fig:ui}, \flowbook{} extends Jupyter's user
interface to identify cells that are stale (yellow messages starting
with \textcolor{warningyellow}{\faExclamationTriangle}) or that are
not rerun consistent (red messages starting with
\textcolor{red}{\faTimes}).  Throughout this paper, bracket numbers to
the left of each cell (e.g., \texttt{[6]}) indicate execution order.
The brackets are empty for unexecuted cells.

Across  61 Kaggle competition notebooks,
\flowbook{} 
incurs a median per-cell time overhead of 70\,ms, which is
low enough to typically have
no perceptible impact on interactive use.

In summary, this paper
\begin{itemize}[topsep=0pt]
\item motivates \flowbook's approach 
  to ensuring reproducibility (Section~\ref{sec:example});
 \item formalizes notebook interactive execution and defines reproducibility
  (Section~\ref{sec:semantics});
 \item presents a dynamic analysis that maintains a well-formed
   invariant on instrumented notebooks and proves that well-formed notebooks
   with all cells clean are reproducible (Section~\ref{sec:analysis});
\item describes \flowbook, our implementation for
  Python notebooks 
  (Section~\ref{sec:implementation}); and
\item presents results showing that \flowbook{} incurs a
  median overhead of 70\,ms per cell execution in a collection of real-world notebooks
  (Section~\ref{sec:evaluation}).
\end{itemize}

%% file: example.tex
%%%%%%%%%%%%%%%%%%%%%%%%%%%%%%%%%%%%%%%%%%%%%%%%%%%%%%%%%%%%%%%%%%%%%%%%
\section{Motivating Example}
\label{sec:example}
%%%%%%%%%%%%%%%%%%%%%%%%%%%%%%%%%%%%%%%%%%%%%%%%%%%%%%%%%%%%%%%%%%%%%%%%

Suppose health policy researchers are interested in understanding what factors
influence healthcare spending in the United States\footnote{This use case is inspired by a real-world research 
project to which one of the authors contributed. 
\ifarxiv
See~\cite{johnson2022varied} for more details.
\else
We have omitted the citation to maintain anonymity.
\fi
}. They collect a dataset of
recent health expenditures among adults, their annual income, current state of
residence, and age. In a computational notebook
(Figure~\ref{fig:notebook}), they load the dataset (\cell{B}) and then
perform a series of explorations and analyses across many cells, including
a strong positive correlation between age and spending.
This result is what they expect, since older adults typically have higher healthcare needs. They also perform
a t-test comparing
spending between Texas and New York (\cell{E}) but see no statistical difference (p-value is 0.097).
The researchers continue exploring and analyzing the dataset when they visualize a histogram of
the income distribution (\cell{F}). The histogram from \cell{F} reveals a suspicious spike at income~$=$~0.
To investigate, the researchers want to filter out these observations and
re-examine their analyses. They edit \cell{B} to add a filter and rerun the cell:
\begin{center}
  \vspace{-0.5ex}  
  \begin{minipage}[t]{0.5\linewidth}\vspace{0pt}
  \input{nbs/edit1-filter.tex}
  \end{minipage}
  \vspace{-0.5ex}
\end{center}
\noindent
This single edit invalidates every downstream cell that depends on
\texttt{df}, including \cell{C}, \cell{E}, and~\cell{F}.
We trace what happens next under two scenarios: standard Jupyter and Jupyter with \flowbook.

\begin{figure}[t]
  \noindent
  \begin{minipage}[t]{0.48\linewidth}\vspace{0pt}
  \input{nbs/motivating-example.tex} 
  \end{minipage}
  \begin{minipage}[t]{0.48\linewidth}\vspace{0pt}
  \input{nbs/motivating-example-with-Bprime.tex}
  \end{minipage}
  \vspace{-1ex}
  \caption{Selected cells from a notebook analyzing healthcare spending
    data. The notebook contains many additional cells (elided with dots) interspersed among those shown here.
    \cell{B} loads a dataset with columns: spending, income, state, and age.
    \cell{C} correlates age and spending.
    \cell{E} compares spending between Texas and New York.
    \cell{F} visualizes the income distribution.}
  \label{fig:notebook}
  \vspace{-1ex}  
\end{figure}

\subsection{Standard Jupyter}

The Jupyter notebook provides no indication that \cell{C}, \cell{E}, and \cell{F} are out of date.
The researchers must manually determine which cells to rerun. They rerun
\cell{C} (the correlation) and \cell{F} (the histogram), but they overlook \cell{E}
(the t-test). The notebook now contains an inconsistency: \cell{C} and \cell{F}
reflect the filtered dataset (income~$>$~0), while \cell{E} still reflects the
full dataset. Unaware of this, the researchers note that the geographic
comparison remains statistically not significant and move on.
During their further exploration, the researchers add a new cell (\cell{D}) after
\cell{C} to normalize the age column, and they then run that new cell:
\begin{center}
  \begin{minipage}[t]{0.5\linewidth}\vspace{0pt}
  \input{nbs/edit2-normalize.tex}
  \end{minipage}
  \vspace{-0.5ex}
\end{center}
\noindent
A traditional notebook silently permits
\cell{D} to overwrite \texttt{df["age"]} after \cell{C} has already
read it, leaving the researchers unaware that their correlation was computed on
different data than the subsequent analyses. This kind of insertion of a new
cell that writes to an already-read variable is common, especially as users are trying out different transformations. 
It can be a major
source of errors in notebooks since the implications are not immediately detectable. 
% The user must manually tracking read/write dependencies.

The researchers go back to iterate on their income filter. They change \cell{B} to examine
only the zero-income subgroup:
\begin{center}
  \begin{minipage}[t]{0.5\linewidth}\vspace{0pt}
  \input{nbs/edit3-zero-income.tex}
  \end{minipage}
  \vspace{-0.5ex}
\end{center}
\noindent
Again, they rerun some cells but not others. The inconsistency compounds:
different cells now may reflect three different versions of the data (full,
income~$>$~0, and income~$=$~0). To double-check their work, the researchers
rerun the entire notebook from top to bottom. To their dismay, they obtain
results that differ from their original conclusions. The t-test now reveals a
statistically significant geographic difference that was not apparent earlier.
The researchers not only fail to reproduce the previous findings they had built
upon, but they also cannot determine which results are correct. 
They must now audit every cell manually to
diagnose the discrepancy in order to determine which of their
earlier conclusions were valid. They must perform this audit every time they make an edit, which is a major impediment to iterative analysis.

\subsection{\flowbook{}}
Now, we contrast the above analysis experience using standard Jupyter with one where researchers use \flowbook.
After the researchers add the filter to \cell{B} (\cell{B'}), \flowbook{}
automatically marks \cell{C}, \cell{E}, and~\cell{F} as \emph{stale}, because each reads
\texttt{df}, which \cell{B} now writes differently. The researchers see at a
glance exactly which cells require re-execution. They rerun all three stale
cells. Upon doing so, they observe that the correlation weakens slightly
(\cell{C}), the t-test now shows a statistically significant geographic
difference (\cell{E}), and the histogram no longer displays the spike at zero
(\cell{F}). All cells are now \emph{clean}: the notebook is consistent with a
fresh top-to-bottom execution.

When the researchers try to run the new cell (\cell{D}) after \cell{C} to
normalize the age column, \flowbook{} flags an error.
\begin{center}
  \begin{minipage}[t]{0.55\linewidth}\vspace{0pt}
  \input{nbs/edit2-normalize-error.tex}
  \end{minipage}
  \vspace{-0.5ex}
\end{center}
\noindent
\cell{C} previously read \texttt{df["age"]}, and now cell \cell{D} is
attempting to also read and then overwrite that column.  If cell
\cell{D} were permitted to do so, rerun consistency would be
violated: \cell{D} would modify the notebook's state in ways inconsistent with any
top-to-bottom execution, and \cell{C}'s correlation was computed on raw ages but would
produce different outputs with the normalized values.  \flowbook{} reports both of these rerun consistency violations.

Guided by this error, the researchers move the normalization into
\cell{B} where the DataFrame is initialized so that all cells below it
see the normalized values from the start. \flowbook{} then marks
\cell{C}, \cell{E}, and \cell{F} as \emph{stale} again. The researchers rerun the
stale cells, and the notebook is again in a rerun consistent state.
The researchers can also change the income filter to examine only the
zero-income subgroup (\cell{B''}). \flowbook{} again marks
\cell{C}, \cell{E}, and \cell{F} as stale. The researchers rerun the stale cells
and confirm that the geographic difference persists for this subgroup.
At every step, \flowbook{} ensures that the researchers know which
results are up to date and which are not, eliminating the possibility
of drawing conclusions from inconsistent state.

\subsection{Key Takeaways} 

This example illustrates two critical challenges that arise when users
modify notebook cells during iterative data analysis. First, edits
create staleness: changing a cell invalidates every cell below it that
depends on its outputs. Second, edits can violate rerun consistency:
outputs for cells may no longer match what re-execution would
produce. \flowbook{} detects such violations immediately; traditional
notebooks silently permit them. As users continue to iterate on their
notebooks, the possibility for error compounds, producing notebooks
whose visible outputs do not correspond to any single consistent
execution.

%% file: nbs/edit1-filter.tex
\begin{nbnotebook}{Modified \cell{B'}: Filter out zero-income observations}
\begin{nbcode}{6}{@B'}
\begin{lstlisting}
df = load_data()
df = df[df["income"] > 0]
\end{lstlisting}
\end{nbcode}
\end{nbnotebook}

%% file: nbs/motivating-example.tex
\begin{nbnotebook}{Jupyter Notebook: \texttt{healthcare.ipynb}}
\begin{nbcode}{1}{@A}   
\begin{lstlisting}
import pandas as pd
from scipy.stats import pearsonr
from scipy.stats import ttest_ind
\end{lstlisting}
\end{nbcode}

\begin{nbcode}{2}{@B}
\begin{lstlisting}
df = load_data()
\end{lstlisting}
\end{nbcode}

\nbvdots

\begin{nbcode}{3}{@C}
\begin{lstlisting}
pearsonr(df["age"],df["spending"])
\end{lstlisting}
\end{nbcode}

\begin{nboutput}{3}
PearsonRResult(statistic=0.97, \\ \hspace*{1em} pvalue=1.06e-12)
\end{nboutput}

\nbvdots

\begin{nbcode}{4}{@E}
\begin{lstlisting}
ttest_ind(
  df[df["state"]=="TX"]["spending"],
  df[df["state"]=="NY"]["spending"]
)
\end{lstlisting}
\end{nbcode}

\begin{nboutput}{4}
TtestResult(statistic=1.75, \\ \hspace*{1em} pvalue=0.097, df=18.0)
\end{nboutput}

\nbvdots

\begin{nbcode}{5}{@F}
\begin{lstlisting}
df["income"].hist(bins=20)
\end{lstlisting}
\end{nbcode}

\begin{nboutput}{5}
\includegraphics[width=0.9\textwidth]{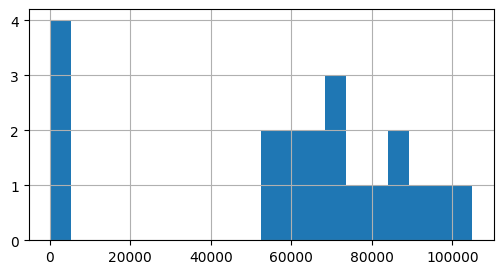}
\end{nboutput}
\end{nbnotebook}

%% file: nbs/motivating-example-with-Bprime.tex
\begin{nbnotebook}{\flowbook{}: \texttt{healthcare.ipynb} (with @B')}
\begin{nbcode}{1}{@A}
\begin{lstlisting}
import pandas as pd
from scipy.stats import pearsonr
from scipy.stats import ttest_ind
\end{lstlisting}
\end{nbcode}

\begin{nbcode}{6}{@B'}
\begin{lstlisting}
df = load_data()
df = df[df["income"] > 0]
\end{lstlisting}
\end{nbcode}

\nbvdots

\begin{nbcodewarn}{3}{@C}
\begin{lstlisting}
pearsonr(df["age"],df["spending"])
\end{lstlisting}
\end{nbcodewarn}

\begin{nbwarning}
df modified by @B'
\end{nbwarning}

\nbvdots

\begin{nbcodewarn}{4}{@E}
\begin{lstlisting}
ttest_ind(
  df[df["state"]=="TX"]["spending"],
  df[df["state"]=="NY"]["spending"]
)
\end{lstlisting}
\end{nbcodewarn}
\begin{nbwarning}
df modified by @B'
\end{nbwarning}

\nbvdots

\begin{nbcodewarn}{5}{@F}
\begin{lstlisting}
df["income"].hist(bins=20)
\end{lstlisting}
\end{nbcodewarn}
\begin{nbwarning}
df modified by @B'
\end{nbwarning}

\nbvdots

\end{nbnotebook}

%% file: nbs/edit2-normalize.tex
\begin{nbnotebook}{Inserted \cell{D} after \cell{C}: Normalize age column}
\begin{nbcode}{9}{@D}
\begin{lstlisting}
df["age"] = 
  (df["age"] - df["age"].mean()) / 
  df["age"].std()
\end{lstlisting}
\end{nbcode}
\end{nbnotebook}

%% file: nbs/edit3-zero-income.tex
\begin{nbnotebook}{Modified \cell{B''}: Zero-income observations only}
\begin{nbcode}{12}{@B''}
\begin{lstlisting}
df = load_data()
df = df[df["income"] == 0]
\end{lstlisting}
\end{nbcode}
\end{nbnotebook}

%% file: nbs/edit2-normalize-error.tex
\begin{nbnotebook}{\flowbook{} reports a re-rerun consistency error}
\begin{nbcodeerror}{9}{@D}
\begin{lstlisting}
df["age"] = 
  (df["age"] - df["age"].mean()) / 
  df["age"].std()
\end{lstlisting}
\end{nbcodeerror}

\begin{nberror}
Both reads and writes \texttt{df["age"]} \\[0.2em]
\textcolor{red}{\faTimes} Writes \texttt{df["age"]} already read by \cell{C} above
\end{nberror}

\end{nbnotebook}

%% file: semantics-col.tex
%%%%%%%%%%%%%%%%%%%%%%%%%%%%%%%%%%%%%%%%%%%%%%%%%%%%%%%%%%%%%%%%%%%%%%%%
\section{A Semantics for Interactive Computational Notebook Execution}
\label{sec:semantics}
%%%%%%%%%%%%%%%%%%%%%%%%%%%%%%%%%%%%%%%%%%%%%%%%%%%%%%%%%%%%%%%%%%%%%%%%

This section formalizes an idealized model of computational notebooks
and defines reproducibility.

%% ---------------------------------------------------------------------
\subsection{Notebook Model}
%% ---------------------------------------------------------------------

A \emph{notebook} consists of a sequence of cells, each containing
source code $c \in \mathit{Code}$ and outputs $o \in
\mathit{Output}$. We leave the exact form of the cell source code and
output unspecified, as it is not relevant to the semantics.  In
addition to those components, which are visible to the user, the
notebook state also includes a hidden execution store used during cell
evaluation.  Thus, a notebook state is a tuple $S = (C, O, \Sigma)$,
where:
\begin{itemize}[nosep]
  \item $C = C_1, \ldots, C_n$ is the sequence of cell source code,
  \item $O = O_1, \ldots, O_n$ is the sequence of cell outputs, and
  \item $\Sigma : \Loc \to \Val$ is the current store.
  \end{itemize}
We write $C_i$ for the source code of cell $i$, $O_i$ for its most
recent output, and $\Sigma(\ell)$ for the value of location $\ell$ in
store $\Sigma$. An output $O_i = \bot$ indicates cell $i$ has not yet
been executed.

\noindent
\myparagraph{Locations} We model state at two location granularities: top-level variables 
and individual columns for DataFrames. A \emph{location}
$\ell \in \Loc$ identifies a unit of state that cells may read or
write:
\[
\ell \in \Loc ::= x \mid d.c
\]
where $x \in \mathit{Var}$ denotes a top-level variable binding, $d
\in \mathit{Address}$ denotes a DataFrame address, and $c \in
\mathit{String}$ denotes column $c$ of the DataFrame at address $d$.

This location granularity reflects a key observation about notebook
workflows: Cells communicate through global variables, and DataFrame
columns are the natural unit of independent modification.  When
reasoning about inter-cell dependencies, this formulation enables our analysis 
to identify when two cells read or write to disjoint columns of the
DataFrame at address $d$.  In contrast, our analysis treats all other
structures, such as lists, dictionaries, and arrays, as atomic values 
stored at a single location, and all reads and updates operate on these
values.

We assume DataFrame addresses are immutable: address $d$ always refers
to the same DataFrame object.  An operation like \<df['price'] = 100>
reads the DataFrame address $d$ from the global variable \texttt{df},
and writes to the column \texttt{price} of the DataFrame at address
$d$.

\noindent
\myparagraph{Cell Evaluation} We define cell evaluation using a big-step judgment that captures the
effect of executing a single cell.  The judgment 
\[
\StdEvalCell{c}{\Sigma}{o}{\Sigma'}
\]
asserts that
executing the code $c$ in store $\Sigma$ produces output $o$ and
resulting store $\Sigma'$.

We treat cell evaluation as a black box that models the underlying
language runtime (Python, in our implementation) while
abstracting over the details of expression evaluation, statement
execution, library calls, and calls into native code (e.g., C extensions).

%% ---------------------------------------------------------------------
\subsection{Notebook Operations: Syntax and Semantics}
%% ---------------------------------------------------------------------

Users interact with notebooks through operations that run cells, edit
their contents, or modify notebook structure:
\[
op ::= \Run(i) \mid \Edit(i,\, c) \mid \Insert(i,\, c) \mid \Delete(i) \mid \Move(i,\, j)
\]
where:
\begin{itemize}[nosep]
\item $\Run(i)$ executes cell $i$, updating its output and the store.
\item $\Edit(i, c)$ replaces the source code of cell $i$ with $c$.
\item $\Insert(i, c)$ inserts a new cell with code $c$ at position $i$,
  shifting cells $i$ through $n$ to positions $i{+}1$ through $n{+}1$.
\item $\Delete(i)$ removes the cell at position $i$, shifting 
  cells $i{+}1$ through $n$ to positions $i$ through $n{-}1$.
\item $\Move(i, j)$ relocates the cell at position $i$ to position $j$.
\end{itemize}

Figure~\ref{fig:std-semantics} specifies how each operation transforms
notebook state. In these rules and below, we write $X_{i..j}$ to denote the subsequence $X_i, \ldots, X_j$, and
$X[i := v]$ to denote the sequence $X$ with position $i$ updated to
$v$. We use $\emptyset$ for the empty store.

The key observation is that \textsc{[Std-Run]} permits
executing any cell at any time, regardless of whether cells above it
have executed. This permissiveness models interactive notebook use,
where users routinely execute cells out of order.
The remaining rules are straightforward. \textsc{[Std-Edit]} updates
the source code without affecting the store. \textsc{[Std-Insert]} and
\textsc{[Std-Delete]} add or remove cells, shifting indices
accordingly. \textsc{[Std-Move]} relocates a cell by composing delete
and insert.

\begin{figure}[tp!]
  \centering
  \[
  \begin{array}{@{}l@{\qquad}l}
  \myheadrule{S \xrightarrow{op} S' \strut} \\[1ex]
  \begin{array}{@{}l@{}}
    \nrln{Std-Run}{
      \StdEvalCell{C_i}{\Sigma}{o}{\Sigma'} \qquad
      O' = O[i := o]
    }{
      (C,\, O,\, \Sigma)
      \;\xrightarrow{\Run(i)}\;
      (C,\, O',\, \Sigma')
    }
    \\~\\
    \nrln{Std-Insert}{
      C' = C_{1..i-1} \cat c \cat C_{i..n} \\
      O' = O_{1..i-1} \cat \bot \cat O_{i..n}
    }{
      (C,\, O,\, \Sigma)
      \;\xrightarrow{\Insert(i,\, c)}\;
      (C',\, O',\, \Sigma)
    }
    \\~\\
    \nrln{Std-Move-Down}{
      s < d \\
      S \;\xrightarrow{\Delete(s)}\; S'' \;\xrightarrow{\Insert(d-1,\, C_s)}\; S'
    }{
      S \;\xrightarrow{\Move(s,\, d)}\; S'
    }
  \end{array}
  &
  \begin{array}{@{}l@{}}
    \nrln{Std-Edit}{
      C' = C[i := c]
    }{
      (C,\, O,\, \Sigma)
      \;\xrightarrow{\Edit(i,\, c)}\;
      (C',\, O,\, \Sigma)
    }
    \\~\\
    \nrln{Std-Delete}{
      C' = C_{1..i-1} \cat C_{i+1..n} \\
      O' = O_{1..i-1} \cat O_{i+1..n}
    }{
      (C,\, O,\, \Sigma)
      \;\xrightarrow{\Delete(i)}\;
      (C',\, O',\, \Sigma)
    }
    \\~\\
    \nrln{Std-Move-Up}{
      s > d \\
      S \;\xrightarrow{\Delete(s)}\; S'' \;\xrightarrow{\Insert(d,\, C_s)}\; S'
    }{
      S \;\xrightarrow{\Move(s,\, d)}\; S'
    }
  \end{array}
  \end{array}
  \]
  \caption{
    The standard semantics operates on notebook state $S = (C, O, \Sigma)$ where
    $C$ is the sequence of cell source code, $O$ is the sequence of cell outputs, and $\Sigma$ is the variable store.}
  \label{fig:std-semantics}
\end{figure}

%% ---------------------------------------------------------------------
\subsection{Reproducibility}
\label{sec:reproducibility}
%% ---------------------------------------------------------------------

The flexibility of the standard semantics is the source of
reproducibility problems. Because users may execute cells in any
order, the current notebook state may diverge from what a clean
top-to-bottom execution would produce.

We formalize the reference behavior using a top-to-bottom execution
judgment. The judgment
\[
\nrln{Std-Top-to-Bottom-Execution}{
  \StdEvalCell{C_1}{\emptyset}{O_1}{\Sigma_1}
  \qquad
  \StdEvalCell{C_2}{\Sigma_1}{O_2}{\Sigma_2}
  \qquad
  \ldots
  \qquad
  \StdEvalCell{C_n}{\Sigma_{n-1}}{O_n}{\Sigma'}
}{
  \StdEvalNotebook{C}{O}{\Sigma'}
}
\]
asserts that executing cells $C_1, \ldots, C_n$ top-to-bottom from the
empty store produces outputs $O_1, \ldots, O_n$ and final store
$\Sigma'$ via the intermediate stores $\Sigma_1, \ldots,
\Sigma_{n-1}$.

\begin{definition}[Reproducible State]
  \label{def:reproducible}
  A notebook state $S = (C, O, \Sigma)$ is \emph{reproducible} if
  there exists $\Sigma'$ such that $\StdEvalNotebook{C}{O}{\Sigma'}$.
\end{definition}
That is, a notebook state is reproducible if and only if executing all
cells top-to-bottom from an empty store produces exactly the outputs
currently recorded in the notebook. The final stores $\Sigma$ and
$\Sigma'$ need not be identical. $\Sigma$ reflects the actual
interactive execution history while $\Sigma'$ reflects the reference
execution, but the outputs $O$ must agree.

%% file: analysis-col.tex
%%%%%%%%%%%%%%%%%%%%%%%%%%%%%%%%%%%%%%%%%%%%%%%%%%%%%%%%%%%%%%%%%%%%%%%%
\section{Dynamic Analysis for Reproducibility}
\label{sec:analysis}
%%%%%%%%%%%%%%%%%%%%%%%%%%%%%%%%%%%%%%%%%%%%%%%%%%%%%%%%%%%%%%%%%%%%%%%%

\begin{figure}[tp!]
  \centering
  \[
  \begin{array}{@{}l@{\quad}l@{}}
  \myheadrule{S \cdot I \xRightarrow{op} S' \cdot I'}\\
  \begin{array}{@{}l@{}}
    \nrln{Inst-Run}{
      S = (C, O, \Sigma) \qquad
      C_i; \Sigma \,\Downarrow\, o \cdot \Sigma' \cdot r \cdot w \\
      S' = (C, O[i := o], \Sigma') \\
      R' = R[i := r] \quad W' = W[i := w] \\
      R' \text{ and } W' \text{ are rerun consistent for } i \\
      T'_j =
      \left\{\begin{array}{@{}l@{\;}l@{}}
        \clean & \text{if } j = i \\
        \stale & \text{if } \ForwardStale(R, W, W', i, j) \\
        \stale & \text{if } \BackwardStale(W, W', i, j) \\
        T_j    & \text{otherwise}
      \end{array}\right.
    }{
      S \cdot (T,\, R,\, W)
      \;\xRightarrow{\Run(i)}\;
      S' \cdot (T',\, R',\, W')
    }
    \\~\\
    \nrln{Inst-Insert}{
      S \;\xrightarrow{\Insert(i,\, c)}\; S' \\
      R' = R_{1..i-1} \cat \emptyset \cat R_{i..n} \\
      W' = W_{1..i-1} \cat \emptyset \cat W_{i..n} \\
      T' = T_{1..i-1} \cat \stale \cat T_{i..n}
    }{
      S \cdot (T,\, R,\, W)
      \;\xRightarrow{\Insert(i,\, c)}\;
      S' \cdot (T',\, R',\, W')
    }
    \\~\\
    \nrln{Inst-Move-Down}{
      s < d \\
      S \cdot I \;\xRightarrow{\Delete(s)}\; S'' \cdot I'' \;\xRightarrow{\Insert(d{-}1,\, C_s)}\; S' \cdot I'
    }{
      S \cdot I \;\xRightarrow{\Move(s,\, d)}\; S' \cdot I'
    }
    \end{array}
    &
    \begin{array}{@{}l@{}}
      \nrln{Inst-Edit}{
        S \;\xrightarrow{\Edit(i,\, c)}\; S' \qquad
        T' = T[i := \stale]
      }{
        S \cdot (T,\, R,\, W)
        \;\xRightarrow{\Edit(i,\, c)}\;
        S' \cdot (T',\, R,\, W)
      }
    \\~\\
    \nrln{Inst-Delete}{
      S \;\xrightarrow{\Delete(i)}\; S' \\
      R'' = R[i := \emptyset] \qquad
      W'' = W[i := \emptyset] \\
      T''_j =
      \left\{\begin{array}{@{}l@{\;}l@{}}
        \stale & \text{if } \ForwardStale(R, W, W'', i, j) \\
        \stale & \text{if } \BackwardStale(W, W'', i, j) \\
        T_j    & \text{otherwise}
      \end{array}\right. \\
      R' = R''_{1..i-1} \cat R''_{i+1..n} \qquad
      W' = W''_{1..i-1} \cat W''_{i+1..n} \\
      T' = T''_{1..i-1} \cat T''_{i+1..n}
    }{
      S \cdot (T,\, R,\, W)
      \;\xRightarrow{\Delete(i)}\;
      S' \cdot (T',\, R',\, W')
    }
    \\~\\
    \nrln{Inst-Move-Up}{
      s > d \\
      S \cdot I \;\xRightarrow{\Delete(s)}\; S'' \cdot I'' \;\xRightarrow{\Insert(d,\, C_s)}\; S' \cdot I'
    }{
      S \cdot I \;\xRightarrow{\Move(s,\, d)}\; S' \cdot I'
    }
  \end{array}
  \end{array}
  \]
  \caption{
    The instrumented semantics operates on $S \cdot I$.
    $I = (T, R, W)$ tracks cell status, read sets, and write sets.}
  \label{fig:inst-semantics}
\end{figure}

This section presents \flowbook's dynamic analysis for reproducibility as 
an instrumented semantics.  It
ensures reproducibility by tracking read and write sets of locations
for each executed cell, enforcing rerun consistency, and maintaining
cell status (either $\stale$ or $\clean$).  We prove that this
instrumentation preserves a well-formedness invariant and that
well-formed states with all clean cells are reproducible.

%% ---------------------------------------------------------------------
\subsection{Instrumentation State}
%% ---------------------------------------------------------------------

Figure~\ref{fig:inst-semantics} presents the instrumented semantics
for an instrumented state $S \cdot I$ that augments the standard state
$S$ with \emph{instrumentation} $I = (T, R, W)$ where:
\begin{itemize}[nosep]
\item $T_i \in \{\clean, \stale\}$ is a \emph{tag} that records the status of cell $i$,
\item $R_i \subseteq \Loc$ is the set of locations read
  by cell $i$, and
\item $W_i \subseteq \Loc$ is the set of locations written by cell $i$.
\end{itemize}
The status $T_i = \clean$ indicates that cell $i$'s recorded output
remains valid with respect to the current store. The status
$T_i = \stale$ indicates that the cell must be re-executed. Initially,
all cells are stale with empty read and write sets.

%% ---------------------------------------------------------------------
\subsection{Instrumented Cell Evaluation}
%% ---------------------------------------------------------------------

\input{nbs/litmus-tests.tex}

Our instrumented semantics is based on an instrumented cell evaluation judgment
\[
\InstEvalCell{c}{\Sigma}{o}{\Sigma'}{r}{w}
\]
that extends standard cell evaluation
$
\StdEvalCell{c}{\Sigma}{o}{\Sigma'}
$
to also record the set of read locations $r \subseteq \Loc$ in $\Sigma$
that are read during evaluation of $c$, and the set of write locations
$w \subseteq \Loc$ where $\Sigma'$ is updated from $\Sigma$.
Note that a cell such as
\[
\texttt{x = 0;  x = x + 1}
\]
does not technically read the initial value of \texttt{x} in the
store, so this cell would have an empty read set $r = \emptyset$ and a
write set $w = \{ \texttt{x} \}$.  The \flowbook{} implementation
instruments the underlying run-time system to capture these sets.

The \textsc{[Inst-Run]} rule uses this extended judgment to evaluate
cell $i$ via $\InstEvalCell{C_i}{\Sigma}{o}{\Sigma'}{r}{w}$. That rule
requires that the read and write sets satisfy rerun consistency for the cell, meaning
that they capture accesses are rerun consistent, which is formally defined as follows.
\begin{definition}[Rerun Consistent Accesses] \label{def:rerun-consistent-accesses}
  Instrumentation state $R$ and $W$ are \emph{rerun consistent} for cell $i$
  if these predicates hold:
  \begin{itemize}
    \item $\NoReadAndWrite(R, W, i)$: cell $i$ does not read and write the same location.
    \item $\WriteBeforeRead(R, W, i)$: every location read by $i$ was written by a cell above $i$.
    \item $\NoReadBeforeWrite(R, W, i)$: cell $i$ does not read locations written by cells below $i$.
    \item $\NoWriteAfterRead(R, W, i)$: cell $i$ does not overwrite locations read by cells above $i$.
  \end{itemize}
\end{definition}
We define these predicates formally as follows and illustrate them in Figure~\ref{fig:litmus-tests}.
\begin{eqnarray*}
  \NoReadAndWrite(R, W, i) &\defeq & R_i \cap W_i = \emptyset \\
  \WriteBeforeRead(R, W, i) &\defeq & R_i \subseteq (\cup\, W_{1..i-1}) \\
  \NoReadBeforeWrite(R, W, i) &\defeq & R_i \cap (\cup\, W_{i+1..n}) = \emptyset \\
  \NoWriteAfterRead(R, W, i) &\defeq &  W_i \cap (\cup\, R_{1..i-1}) = \emptyset
\end{eqnarray*}

The \textsc{[Inst-Run]} rule also sets to stale any cells whose inputs may have
been invalidated by the executed cell. That includes cells $C_j$ below $i$ whose
reads $R_j$ or writes $W_j$ conflict with the executed cell's writes $W'_i$ or 
with locations $W_i \setminus W'_i$ that used to be written by cell $i$ but no longer are:
\begin{align*}
  \ForwardStale(R, W, W', i, j) &\defeq j > i \land (W_i \cup W'_i) \cap (R_j \cup W_j) \neq \emptyset
\end{align*}
If a cell $i$ previously wrote to a location $\ell$ but no longer does
(i.e., $\ell \in W_i \setminus W'_i$) then we also mark the nearest
writer of $\ell$ above cell $i$ as stale. This ensures that the write is
re-executed to produce a store consistent with
top-to-bottom execution.
\begin{align*}
  \LastWriter(W, i, \ell) &\defeq \max \{ k < i \mid \ell \in W_k \} \\
  \BackwardStale(W, W', i, j) &\defeq j < i \land j = \LastWriter(W, i, \ell) \text{ for some } \ell \in W_i \setminus W'_i \\
\end{align*}

The \textsc{[Inst-Delete]} rule uses the same logic to identify stale cells after
deleting cell $i$: these are cells whose reads or writes conflict with
what the deleted cell had written. 
Figure~\ref{fig:staleness-litmus-tests} illustrates these forward and backward staleness on a collection of small example notebooks.
The remaining rules are straightforward.

\input{nbs/staleness-litmus-tests.tex}

%% ---------------------------------------------------------------------
\subsection{Well-Formedness}
\label{sec:well-formed}
%% ---------------------------------------------------------------------

The instrumented semantics maintains a \emph{well-formedness
  invariant} that characterizes when a notebook state permits
reproducible execution.

\begin{definition}[Well-Formed State]
  \label{def:well-formed}
  An instrumented state $S \cdot I = (C, O, \Sigma) \cdot (T, R, W)$ is
  \emph{well-formed} if for every $i$ with $T_i = \clean$, there
  exists $\Sigma'$ such that:
  \begin{enumerate}[nosep]
  \item $\InstEvalCell{C_i}{\Sigma}{O_i}{\Sigma'}{R_i}{W_i}$,
  \item $\Sigma$ and $\Sigma'$ agree except on 
  $\bigcup W_{i+1..n}$, the set of locations written by cells below $i$.
  \item $R$ and $W$ are rerun consistent for $i$.
  \end{enumerate}
\end{definition}
That is, every clean cell $i$ can be re-executed from the current store to
produce its recorded output $O_i$ and read and write sets $R_i$ and $W_i$,
and only changes locations in $\Sigma'$ that are overwritten by writes below $i$.

%% ---------------------------------------------------------------------
\subsection{Correctness}
\label{sec:correctness}
%% ---------------------------------------------------------------------

We establish correctness by first showing that every notebook
operation preserves well-formedness:

\begin{theorem}[Preservation]
  \label{thm:preservation}
  If $S \cdot I$ is well-formed and
  $S \cdot I \xRightarrow{op} S' \cdot I'$, then $S' \cdot I'$ is
  well-formed.
\end{theorem}

\begin{proof}[Proof sketch]
  The proof proceeds by cases on the operation. For
  \textsc{[Inst-Run]}, the key insight is that the rule antecedents
  ensure the executed cell satisfies rerun consistency, while the
  staleness predicates mark any cell that would violate rerun
  consistency. For \textsc{[Inst-Edit]}, the edited cell becomes
  stale, preserving rerun consistency vacuously. For structural
  operations, cells that would violate rerun consistency after index
  shifting are marked stale.
\end{proof}

Second, well-formed states with all cells clean are reproducible:

\begin{theorem}[Reproducibility]
  \label{thm:reproducibility}
  If $S \cdot I = (C, O, \Sigma) \cdot (T, R, W)$ is well-formed and
  $T_i = \clean$ for all $i$, then $(C, O, \Sigma)$ reproducible, \emph{i.e.} there exists $\Sigma'$ such that
  $\StdEvalNotebook{C}{O}{\Sigma'}$.
\end{theorem}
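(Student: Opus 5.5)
The plan is to build the top-to-bottom execution $\StdEvalCell{C_1}{\emptyset}{O_1}{\Sigma_1}$, $\ldots$, $\StdEvalCell{C_n}{\Sigma_{n-1}}{O_n}{\Sigma_n}$ one cell at a time, by induction on $k$. At each step I compare the reference store $\Sigma_k$ with the actual interactive store $\Sigma$. Store agreement alone is not enough; I also need a \emph{locality} property of the instrumented judgment, which I take as the intended meaning of the recorded sets $r$ and $w$:
\begin{itemize}
\item Suppose $\InstEvalCell{c}{\Sigma}{o}{\Sigma'}{r}{w}$ holds and $\hat\Sigma$ agrees with $\Sigma$ on $r$.
\item Then $\StdEvalCell{c}{\hat\Sigma}{o}{\hat\Sigma'}$ holds, where $\hat\Sigma'$ is $\hat\Sigma$ updated on $w$ with the values $\Sigma'(\ell)$.
\end{itemize}
In words, a cell's behaviour depends only on the locations it reads, and its effect is exactly its writes. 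This is the step I expect to be the main obstacle. The paper treats cell evaluation as a black box, so this property must be stated explicitly as an assumption on $\Downarrow$, namely that the instrumentation soundly and completely records reads and writes. Without it the theorem is false, for example because of nondeterminism or untracked state.

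I will prove by induction on $k$ the invariant
\[
\forall \ell \in \big(\textstyle\bigcup W_{1..k}\big) \setminus \big(\textstyle\bigcup W_{k+1..n}\big).\;\; \Sigma_k(\ell) = \Sigma(\ell),
\]
together with the claim that cells $1..k$ produce outputs $O_1,\ldots,O_k$. The base case $k=0$ is vacuous, since $\Sigma_0=\emptyset$ and the union of write sets is empty.

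For the inductive step, consider cell $k{+}1$. Because it is clean and the state is well-formed, Definition~\ref{def:well-formed} gives $\InstEvalCell{C_{k+1}}{\Sigma}{O_{k+1}}{\Sigma'}{R_{k+1}}{W_{k+1}}$, where $\Sigma'$ agrees with $\Sigma$ outside $\bigcup W_{k+2..n}$ and $R,W$ are rerun consistent for $k{+}1$. I first show that $R_{k+1}$ lies in the region covered by the invariant:
\begin{itemize}
\item By $\WriteBeforeRead$, $R_{k+1} \subseteq \bigcup W_{1..k}$.
\item By $\NoReadBeforeWrite$, $R_{k+1}$ is disjoint from $\bigcup W_{k+2..n}$.
\item By $\NoReadAndWrite$, $R_{k+1}$ is disjoint from $W_{k+1}$.
\end{itemize}
Hence $R_{k+1} \subseteq (\bigcup W_{1..k}) \setminus (\bigcup W_{k+1..n})$, so $\Sigma_k$ agrees with $\Sigma$ on $R_{k+1}$. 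Locality then yields $\StdEvalCell{C_{k+1}}{\Sigma_k}{O_{k+1}}{\Sigma_{k+1}}$, where $\Sigma_{k+1}$ is $\Sigma_k$ overwritten on $W_{k+1}$ by the values of $\Sigma'$.

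To re-establish the invariant at $k{+}1$, take $\ell \in (\bigcup W_{1..k+1}) \setminus (\bigcup W_{k+2..n})$ and split into two cases.
\begin{itemize}
\item If $\ell \in W_{k+1}$, then $\Sigma_{k+1}(\ell) = \Sigma'(\ell)$. Since $\ell \notin \bigcup W_{k+2..n}$, well-formedness clause~2 gives $\Sigma'(\ell) = \Sigma(\ell)$.
\item If $\ell \notin W_{k+1}$, then $\ell \in \bigcup W_{1..k}$ and $\ell \notin \bigcup W_{k+1..n}$. So $\Sigma_{k+1}(\ell) = \Sigma_k(\ell) = \Sigma(\ell)$ by the induction hypothesis.
\end{itemize}
At $k=n$ the outputs of all cells match, so \textsc{[Std-Top-to-Bottom-Execution]} gives $\StdEvalNotebook{C}{O}{\Sigma_n}$ with $\Sigma' := \Sigma_n$.

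$\NoWriteAfterRead$ is not needed in this argument; it matters only for preservation (Theorem~\ref{thm:preservation}). Apart from pinning down the locality assumption, the only delicate point is the careful choice of the invariant's domain, which must exclude locations that are later overwritten. This exclusion is exactly what makes clause~2 of well-formedness, which only constrains locations outside $\bigcup W_{k+2..n}$, sufficient.
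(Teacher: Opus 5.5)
Your proposal is correct and follows the paper's proof essentially step for step. It uses the same induction, with the invariant that the reference store agrees with $\Sigma$ on $W_{1..k} \setminus W_{k+1..n}$, and the same three rerun-consistency predicates to place $R_{k+1}$ inside that region. The locality property you state explicitly is what the paper uses implicitly when it asserts that running $C_i$ from $\Sigma_{i-1}$ ``yields the same output and write $W_i$'' as running from $\Sigma$. Your two-case check re-establishing the invariant fills in a step the paper leaves unstated.
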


\begin{proof}[Proof sketch]
  By induction on $i$. The inductive hypothesis states that executing
  $C_1, \ldots, C_i$ top-to-bottom from the empty store produces
  outputs $O_1, \ldots, O_i$ and a store that agrees with $\Sigma$ on
  locations in $W_{1..i}$ that do not conflict with writes below $i$
  ($W_{i+1..n}$). The base case is trivial. For the inductive step,
  rerun consistency of cell $i$ together with \WriteBeforeRead{} and
  \NoReadBeforeWrite{} ensure that the last execution of cell $i$ read
  the same values as it would in a top-to-bottom execution, thereby
  producing the same output $O_i$.
\end{proof}

The preservation and reproducibility theorems establish that well-formedness is
maintained and that all-clean states are rerun consistent, but they do not
guarantee that an all-clean state is \emph{reachable}. The following
theorem completes the picture by showing that the natural strategy of
running stale cells from top to bottom always terminates, either in a
fully rerun reproducible notebook or at a cell that cannot execute.

\begin{theorem}[Progress]
  \label{thm:progress}
  If $S \cdot I$ is well-formed, then the strategy of repeatedly
  running the first stale cell terminates in a state $S' \cdot I'$ where either:
  \begin{enumerate}[nosep]
  \item all cells are clean (and so the notebook is reproducible by
    Theorem~\ref{thm:reproducibility}), or
  \item the first stale cell $i$ is \emph{stuck}: there is no
    $S' \cdot I'$ such that
    $S \cdot I \xRightarrow{\Run(i)} S' \cdot I'$, either because the
    rerun consistency checks in \textsc{[Inst-Run]} fail or because the
    underlying cell evaluation produces an error.
  \end{enumerate}
\end{theorem}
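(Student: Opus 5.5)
The plan is to view the strategy as a sequence of \textsc{[Inst-Run]} steps on the first stale cell and to prove termination with a well-founded measure. The two outcomes then fall out directly. At each step either the first stale cell $i$ admits no derivation of $S \cdot I \xRightarrow{\Run(i)} S' \cdot I'$, which is case~(2), or there are no stale cells, which is case~(1). In case~(1), Theorem~\ref{thm:preservation} applied along the whole run shows the final state is well-formed, so Theorem~\ref{thm:reproducibility} applies. The real content is therefore that the loop cannot run forever. I will assume, as the paper implicitly does, that cell evaluation $\Downarrow$ is deterministic: the same code on stores agreeing on the locations actually read yields the same output, read set, write set and updated locations.

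The first measure I would use is the tag vector read as a binary number, with cell~$1$ the most significant bit and $\stale = 1$. A successful $\Run(i)$ on the first stale cell clears bit $i$. \ForwardStale{} only sets bits $j > i$, so it strictly decreases this number. The only danger is \BackwardStale{}, which can set a bit $k < i$. That happens exactly when $W_i \setminus W'_i \neq \emptyset$, i.e.\ the fresh run of $i$ writes strictly fewer locations than its recorded write set.

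The heart of the argument is to bound these backward events. At the moment $i$ runs, all cells $1..i-1$ are clean. By well-formedness (clauses 1--3), the locations in $\bigcup R_{1..i-1}$ and the values the prefix writes are fixed: the prefix's reads are disjoint from writes below (\NoReadBeforeWrite) and from $W'_i$ (\NoWriteAfterRead, checked by \textsc{[Inst-Run]}). Hence a backward-staled cell $k$, when rerun, reads the same values as before and reproduces the same $O_k, R_k, W_k$ by determinism. So $k$ becomes clean again with $W_k = W'_k$, generating no further backward events. I would then argue that the same holds for any cell between $k$ and $i$ remarked by $k$'s \ForwardStale{}.

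Cell $i$ is eventually rerun from a store whose values on $R_i$ coincide with those of its previous run. By determinism its write set then stays equal to $W'_i$, so the shrink $W_i \setminus W'_i$ cannot recur for $i$ with the same prefix. This suggests a lexicographic measure: first $\sum_i |W_i|$ restricted to cells whose prefix is clean, or alternatively the number of pairs $(i,\ell)$ with $\ell$ still eligible to be dropped from $W_i$; second, the binary tag number. Each backward event strictly decreases the first component; every other step preserves it and decreases the second.

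I expect this bounding of backward staleness to be the main obstacle. The difficulty is showing that the first component never increases, since a cell's write set can grow again after an upstream change. My first attempt would be to show that, under the first-stale-cell strategy, once cells $1..i-1$ are all clean they stay clean with unchanged $R$ and $W$ until some cell above is edited. No edits occur during the strategy, so the write sets of the clean prefix are frozen. Then $\sum |W|$ over prefix cells can only shrink, and the number of backward events is bounded by the total size of the initial write sets plus the writes introduced by forward reruns, giving termination.
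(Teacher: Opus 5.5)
Your binary tag measure is sound, and so is your step that a backward-staled cell above $i$ reruns identically; that step is exactly the paper's. The gap is the bound on the number of backward events, which you rightly call the heart of the matter. It is not established, and the reasons you give for it are false.

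Take cell~1 $=$ \texttt{m = 1; n = 7}, clean with $R_1=\emptyset$ and $W_1=\{m,n\}$. Let cell~2 be edited from \texttt{m = 5} to \texttt{print(m); if m == 5: n = 0}, so $T_2=\stale$, $W_2=\{m\}$, and the store holds $m=5$. This state is reachable: run both cells, then edit. The strategy then does the following:
\begin{enumerate}
\item Running cell~2 reads $m=5$ and writes $\{n\}$; \BackwardStale{} on $m$ marks cell~1.
\item Cell~1 reruns identically, but its \ForwardStale{} re-marks cell~2.
\item Cell~2 now reads $m=1$, which differs from its previous run, and writes $\emptyset$; \BackwardStale{} on $n$ marks cell~1 a second time.
\end{enumerate}
This refutes each of your supporting claims:
\begin{itemize}
\item The shrink recurs for the same $i$ even though the prefix behaves identically.
\item At the first backward event $|W_2|$ stays $1$ ($\{m\}$ becomes $\{n\}$), so $\sum|W_i|$ need not drop.
\item With \texttt{m == 1} in place of \texttt{m == 5}, $W_2$ grows from $\emptyset$ to $\{n\}$ on an ordinary run, so the measure is not preserved either.
\item The prefix does not stay clean. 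Its $R$ and $W$ are frozen, but backward events are driven by changes to $W_i$ of the cell being run, which freezing the prefix does not control.
\end{itemize}
The cause is a residual value that $i$ read on its previous run. The paper's sketch has the same hole: after your third-paragraph step it asserts that no rerun of a stale cell $\le i$ marks an earlier cell stale, and the rerun of cell~2 above refutes that.

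More seriously, no measure can close the gap under the determinism assumption you state. Let cells 1--3 be \texttt{x = 0}, \texttt{y = 0}, \texttt{z = 0}. Let cell~4 be edited from \texttt{y = 1}, so the store is $(x,y,z)=(0,1,0)$, to a cell $g$ that behaves as follows:
\begin{itemize}
\item when $x=1,y=0$, it reads $\{x,y\}$ and sets $z=1$;
\item when $y=1,z=0$, it reads $\{y,z\}$ and sets $x=1$;
\item when $z=1,x=0$, it reads $\{z,x\}$ and sets $y=1$;
\item otherwise it reads $\{x,y,z\}$ and writes nothing.
\end{itemize}
Each case is determined by the values it reads, and the three cases are pairwise disjoint, so $g$ satisfies your assumption verbatim. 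Every check in \textsc{[Inst-Run]} passes. Yet the strategy runs $g$ from $(0,1,0)$, then $(1,0,0)$, then $(0,0,1)$, then $(0,1,0)$ again, and so on. Each run drops the location the previous run wrote, so \BackwardStale{} reruns one of cells 1--3, whose \ForwardStale{} re-marks $g$. The whole instrumented state repeats after six steps.

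What rescues the theorem for real cells is sequentiality: the location of each store read is determined by the values returned by earlier reads. A sequential cell reads its first location in every run, so it can never write that location. With sequentiality, the argument goes as follows.
\begin{itemize}
\item Whenever $i$ is the first stale cell, apply the induction from the proof of Theorem~\ref{thm:reproducibility}; at step $k$ it only uses that cells $1..k$ are clean. It shows that cells $1..i-1$ have the $O,R,W$ of the top-to-bottom run of $C_{1..i-1}$, and that $\Sigma$ agrees with that run's final store $\Sigma^{tb}$ on $\bigcup W_{1..i-1}\setminus\bigcup W_{i..n}$.
\item Hence a non-stuck run of $i$ can see a non-canonical value only at a location in its current $W_i$.
\item Suppose run $t$ of $i$ matches the canonical read sequence $a_1,a_2,\dots$ of $C_i$ from $\Sigma^{tb}$ on its first $k$ reads, then reads $a_{k+1}$ with a residual value. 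By \NoReadAndWrite{}, none of $a_1,\dots,a_{k+1}$ lies in the new $W_i$. So the next run of $i$ matches on at least $k+1$ reads, or is stuck.
\item All reads lie in the finite set $\bigcup W_{1..i-1}$, so after finitely many runs $i$ behaves canonically forever. Then $W_i$ changes finitely often, and each cell triggers finitely many backward events.
\end{itemize}
Your binary measure then finishes the proof.
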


\begin{proof}[Proof sketch]
  Let $i$ be the first stale cell. If $i$ is stuck, the theorem
  holds. Otherwise, $\Run(i)$ produces a successor state where cell $i$ is clean.
  Cells above $i$ that are marked stale by
  $\BackwardStale$ can be rerun without marking any cells above them
  stale, so the prefix of clean cells strictly increases. Since the
  notebook has finitely many cells, this process terminates. 
\end{proof}

Together, these three theorems establish \flowbook's correctness:
starting from an initially well-formed state (all cells stale), every
sequence of operations maintains well-formedness
(Theorem~\ref{thm:preservation}). The strategy of running stale cells
from top to bottom is guaranteed to terminate
(Theorem~\ref{thm:progress}), and when all cells become clean, the
notebook is guaranteed to be reproducible
(Theorem~\ref{thm:reproducibility}).  The full proofs appear in
the Supplemental Appendix.

%% file: nbs/litmus-tests.tex
\begin{figure}[tp!]
\begin{tabular}{cc}
\begin{minipage}[t]{0.45\textwidth}\vspace{0pt}
%% ============================================================
%% Violation of NoReadAndWrite: Cell reads its own write
%% R_i ∩ W_i ≠ ∅
%% ============================================================
\begin{nbnotebook}{1. Violation of \NoReadAndWrite: \\ Run \cell{A}; Run \cell{B}}
\begin{nbcode}{1}{@A}
\begin{lstlisting}
x = 0
\end{lstlisting}
\end{nbcode}

\begin{nbcodeerror}{2}{@B}
\begin{lstlisting}
x = x + 1
\end{lstlisting}
\end{nbcodeerror}
\begin{nberror}
Both reads and writes \texttt{x}
\end{nberror}
\end{nbnotebook}
\end{minipage} 
&
\begin{minipage}[t]{0.45\textwidth}\vspace{0pt}
%% ============================================================
%% Violation of WriteBeforeRead: Cell reads undefined variable
%% R_i ⊈ W_{1..i-1}
%% ============================================================
\begin{nbnotebook}{2. Violation of \WriteBeforeRead: \\ Run \cell{D}}
\begin{nbcode}{~}{@C}
\begin{lstlisting}
df = pd.read_csv("data.csv")
\end{lstlisting}
\end{nbcode}

\begin{nbcodeerror}{1}{@D}
\begin{lstlisting}
print(df.head())
\end{lstlisting}
\end{nbcodeerror}
\begin{nberror}
\texttt{df} not written by cell above
\end{nberror}
\end{nbnotebook}
    
\end{minipage} \\

\begin{minipage}[t]{0.45\textwidth}\vspace{0pt}

%% ============================================================
%% Violation of NoReadBeforeWrite: Cell reads value written by later cell
%% Execution order: [2], [1] — cell 1 sees value from cell 2
%% R_i ∩ W_{i+1..n} ≠ ∅
%% ============================================================
\begin{nbnotebook}{3. Violation of \NoReadBeforeWrite: \\ Run \cell{F}; Run \cell{E}}
\begin{nbcodeerror}{2}{@E}
\begin{lstlisting}
df["price"].mean()
\end{lstlisting}
\end{nbcodeerror}
\begin{nberror}
Reads \texttt{df["price"]} already written by \cell{F} below
\end{nberror}

\begin{nbcode}{1}{@F}
\begin{lstlisting}
df["price"] = [100, 200, 300]
\end{lstlisting}
\end{nbcode}
\end{nbnotebook}
\end{minipage}
&
\begin{minipage}[t]{0.45\textwidth}\vspace{0pt}
%% ============================================================
%% Violation of NoWriteAfterRead: Cell overwrites variable read by earlier cell
%% Execution order: [1], [3], [2] — cell 2 clobbers input to cell 1
%% W_i ∩ R_{1..i-1} ≠ ∅
%% ============================================================
\begin{nbnotebook}{4. Violation of \NoWriteAfterRead: \\ Run \cell{G}; Run \cell{H}; Run \cell{I}}
\begin{nbcode}{1}{@G}
\begin{lstlisting}
data = np.array([1, 2, 3])
\end{lstlisting}
\end{nbcode}

\begin{nbcode}{2}{@H}
\begin{lstlisting}
np.mean(data)
\end{lstlisting}
\end{nbcode}

\begin{nbcodeerror}{3}{@I}
\begin{lstlisting}
data = np.array([10, 20, 30])
\end{lstlisting}
\end{nbcodeerror}
\begin{nberror}
Writes \texttt{data} already read by \cell{H} above
\end{nberror}
\end{nbnotebook}
\end{minipage}
\end{tabular}
\caption{Scenarios violating each predicate from Definition~\ref{def:rerun-consistent-accesses}:
(1)~\NoReadAndWrite{} fails when cell \cell{B} both reads and writes \texttt{x};
(2)~\WriteBeforeRead{} fails when cell \cell{D} reads \texttt{df} with no cell above \cell{D} writing to it;
(3)~\NoReadBeforeWrite{} fails when cell 
    \cell{E} reads \texttt{df["price"]} written by a cell below \cell{E};
(4)~\NoWriteAfterRead{} fails when cell \cell{I} writes \texttt{data} previously read by cell \cell{H}.}
\label{fig:litmus-tests}
\end{figure}

%% file: nbs/staleness-litmus-tests.tex
\begin{figure}[tp!]
  \begin{tabular}{cc}
\begin{minipage}[t]{0.63\textwidth}\vspace{0pt}
%% ============================================================
%% ForwardStale (write→read): Running cell marks downstream reader stale
%% W'_i ∩ R_j ≠ ∅
%% ============================================================
\begin{nbnotebook}{1. \ForwardStale{} (write$\to$read) \\[1em]}
\begin{tabular}{c|c}
\multicolumn{1}{c}{
  \sffamily \small (a) Run \cell{J}; Run \cell{K}}
& 
\sffamily \small (b) then Edit \cell{J}; Rerun \cell{J}
\\[\baselineskip]
\begin{minipage}[t]{0.40\textwidth}\vspace{0pt}
\begin{nbcode}{1}{@J}
\begin{lstlisting}
x = 100
\end{lstlisting}
\end{nbcode}

\begin{nbcode}{2}{@K}
\begin{lstlisting}
print(x)
\end{lstlisting}
\end{nbcode}
\end{minipage}
&
\begin{minipage}[t]{0.50\textwidth}\vspace{0pt}
\begin{nbcode}{3}{@J}
\begin{lstlisting}
x = |100| 9999
\end{lstlisting}
\end{nbcode}

\begin{nbcodewarn}{2}{@K}
\begin{lstlisting}
print(x)
\end{lstlisting}
\end{nbcodewarn}
\begin{nbwarning}[0.9\textwidth]
\texttt{x} modified by \cell{J}
\end{nbwarning}
\end{minipage}
\end{tabular}
\end{nbnotebook}
\end{minipage}
&
\begin{minipage}[t]{0.33\textwidth}\vspace{0pt}
%% ============================================================
%% ForwardStale (write→write): Running cell marks downstream writer stale
%% W'_i ∩ W_j ≠ ∅
%% ============================================================
\begin{nbnotebook}{2. \ForwardStale{} (write$\to$write): \\ Run \cell{M}; Run \cell{L}}
\begin{nbcode}{2}{@L}
\begin{lstlisting}
y = 10
\end{lstlisting}
\end{nbcode}

\begin{nbcodewarn}{1}{@M}
\begin{lstlisting}
y = 20  
\end{lstlisting}
\end{nbcodewarn}
\begin{nbwarning}[0.9\textwidth]
\texttt{y} modified by \cell{L}
\end{nbwarning}
\end{nbnotebook}

\end{minipage} 
\end{tabular} \\ 
%% ============================================================
%% ForwardStale (delete): Deleting cell marks downstream reader stale
%% Uses same ForwardStale logic in Inst-Delete rule
%% ============================================================
\begin{minipage}[t]{0.75\textwidth}\vspace{0pt}
\begin{nbnotebook}{3. \ForwardStale{} (delete)\\[1em]} % 
\begin{tabular}{c|c}
  \multicolumn{1}{c}{
  \sffamily \small (a) Run \cell{N}; Run \cell{O}}
  & 
  \sffamily \small (b) then Delete \cell{N}
  \\[\baselineskip]    
\begin{minipage}[t]{0.45\textwidth}\vspace{0pt}
\begin{nbcode}{1}{@N}
\begin{lstlisting}
df = load_data()
\end{lstlisting}
\end{nbcode}

\begin{nbcode}{2}{@O}
\begin{lstlisting}
df.describe()
\end{lstlisting}
\end{nbcode}
\end{minipage}
&
\begin{minipage}[t]{0.45\textwidth}\vspace{0pt}
\begin{nbghost}{1}{@N}
\begin{lstlisting}
df = load_data()
\end{lstlisting}
\end{nbghost}

\begin{nbcodewarn}{2}{@O}
\begin{lstlisting}
df.describe()
\end{lstlisting}
\end{nbcodewarn}
\begin{nbwarning}[0.9\textwidth]
\texttt{df} modified by a de-\\leted cell
\end{nbwarning}
\end{minipage}
\end{tabular}
\end{nbnotebook}
\end{minipage}
\\
\begin{minipage}[t]{0.75\textwidth}\vspace{0pt}
%% ============================================================
%% BackwardStale: Re-running cell with different code marks upstream stale
%% j = LastWriter(W, i, ℓ) for ℓ ∈ W_i \ W'_i
%% ============================================================
\begin{nbnotebook}{4. \BackwardStale{} (removed write)\\[1em]} % : \\ Run \cell{P}; Run \cell{Q}; Run \cell{R} $\Rightarrow$ Edit \cell{Q}; Run \cell{Q}}
\begin{tabular}{c|c}
  \multicolumn{1}{c}{
  \sffamily \small (a) Run \cell{P}; Run \cell{Q}; Run \cell{R}}
  & 
  \sffamily \small (b) then Edit \cell{Q}; Run \cell{Q}
  \\[\baselineskip]    
\begin{minipage}[t]{0.45\textwidth}\vspace{0pt}
\begin{nbcode}{1}{@P}
\begin{lstlisting}
z = 0
\end{lstlisting}
\end{nbcode}
~\\[-0.5em]
\begin{nbcode}{2}{@Q}
\begin{lstlisting}
z = 99
\end{lstlisting}
\end{nbcode}

\begin{nbcode}{3}{@R}
\begin{lstlisting}
print(z)
\end{lstlisting}
\end{nbcode}

\end{minipage}
&
\begin{minipage}[t]{0.45\textwidth}\vspace{0pt}
\begin{nbcodewarn}{1}{@P}
\begin{lstlisting}
z = 0
\end{lstlisting}
\end{nbcodewarn}
\begin{nbwarning}[0.9\textwidth]
\texttt{z} write conflict with a deleted cell
\end{nbwarning}

\begin{nbcode}{4}{@Q}
\begin{lstlisting}
|z = 99|  other = 10
\end{lstlisting}
\end{nbcode}

\begin{nbcodewarn}{3}{@R}
\begin{lstlisting}
print(z)
\end{lstlisting}
\end{nbcodewarn}
\begin{nbwarning}[0.9\textwidth]
\texttt{z} modified by a de-\\leted cell
\end{nbwarning}
\end{minipage}
\end{tabular}
\end{nbnotebook}
\end{minipage}
  \caption{Scenarios demonstrating when \flowbook{} marks cells stale:
    (1)~(a) after running \cell{J} and \cell{K}, (b) the user edits and reruns \cell{J}, \ForwardStale{} marks cell \cell{K} stale because a cell above it wrote \texttt{x};
    (2)~after running \cell{M} and then \cell{L}, \ForwardStale{} marks cell \cell{M} stale because a cell above it wrote to \texttt{y};
    (3)~(a) after running \cell{N} and \cell{O}, (b) the user deletes \cell{N}. \ForwardStale{} (via \textsc{[Inst-Delete]}) marks cell \cell{O} stale because the above cell \cell{N}, the source of \texttt{df}, is deleted;
    (4)~(a) after running \cell{P}, \cell{Q}, and \cell{R}, (b) the user edits and reruns \cell{Q}. \BackwardStale{} marks cell \cell{P} stale because a cell below it stopped writing \texttt{z}, requiring cell \cell{P} to restore it to an appropriate value; \ForwardStale{} simultaneously marks cell \cell{R} stale.}
  \label{fig:staleness-litmus-tests}
  \end{figure}

%% file: implementation.tex
\begin{figure}[t]

  \begin{minipage}[t]{0.48\linewidth}
  \vspace{0pt}
  \input{nbs/mutation-example}
  \end{minipage}
  \hfill
  \lstdefinestyle{mypython}{
    language         = Python,
    basicstyle       = \ttfamily\footnotesize,
    keywordstyle     = \color{pykw},
    stringstyle      = \color{pystr},
    commentstyle     = \color{pycmt}\itshape,
    showstringspaces = false,
    breaklines       = true,
    tabsize          = 4,
    columns          = flexible,
    keepspaces       = true,
    aboveskip        = 0pt,
    belowskip        = 0pt,
    literate         = {R'}{{R'}}2 {W'}{{W'}}2,
  }
  \begin{minipage}[t]{0.48\linewidth}
  \vspace{0pt}
  \begin{lstlisting}[style=mypython]

# Kernel State
user_ns: TrackingDict[str, Any]
R: dict[int, LocSet]
W: dict[int, LocSet]
T: dict[int, Status]

def execute_cell(code: str, index: int):

  # 1. Pre-execution: Checkpoint the namespace
  pre = checkpoint(user_ns)

  # 2. Execution
  r, w = kernel_execute_with_tracking(code)

  # 3. Post-execution: Apply [Inst-Run]
  R' = {**R, index: r}
  W' = {**W, index: w}
  T[index] = Status.stale

  if diff(pre, user_ns) - w != {}:
    user_ns = pre     
    report_violation("invalid mutation")
  elif not rerun_consistent(R', W', index):
    user_ns = pre
    report_violation("not rerun consistent")
  else:
    update_forward_staleness(R, W, W', index)
    update_backward_staleness(R', W', index)
    R = R'
    W = W'
    T[index] = Status.clean
  \end{lstlisting}
  \end{minipage}%

  \medskip

  \begin{minipage}[t]{0.48\linewidth}
    \captionof{figure}{Element-wise DataFrame mutations are not permitted.
    (a) Initial state after executing all cells.\\
    (b) After deleting cell \cell{C}, the mutation persists. The output differs from a top-to-bottom execution. \\
    (c) \flowbook{} requires full-column assignment.}
    \label{fig:mutation-example}
  \end{minipage}
  \hfill
  \begin{minipage}[t]{0.48\linewidth}
    \captionof{figure}{The kernel execution algorithm. \flowbook{}
      checkpoints the namespace, executes under instrumentation,
      computes read/write sets, checks for rerun consistency
      violations, and either updates state or rolls back.}
    \label{fig:kernel}
  \end{minipage}%
  \vspace{-0.5em}
\end{figure}
%%%%%%%%%%%%%%%%%%%%%%%%%%%%%%%%%%%%%%%%%%%%%%%%%%%%%%%%%%%%%%%%%%%%%%%%
\section{Implementation}
\label{sec:implementation}
%%%%%%%%%%%%%%%%%%%%%%%%%%%%%%%%%%%%%%%%%%%%%%%%%%%%%%%%%%%%%%%%%%%%%%%%

We implement \flowbook{} as a plugin for the Jupyter notebook environment.  \flowbook's core analysis is embedded in an
extension to IPython's
\<ipykernel> execution kernel, the standard Python backend for Jupyter notebooks.
For cell execution, \flowbook{} runs the cell's code with sufficient
instrumentation to ensure that the cell preserves well-formedness and
does not violate rerun consistency.
For edit, insert, delete, and move operations, \flowbook{} intercepts
the corresponding Jupyter protocol messages and updates the
instrumentation state in the kernel without re-executing any cell.
\flowbook{} communicates cell status to the user through the standard Jupyter
front-end: clean cells display normally, stale cells display with a
yellow background and diagnostic message identifying the cause, and
validity violations display a red error message, as shown in Figure~\ref{fig:ui}.

\flowbook{}'s rerun consistency methodology requires full-column
assignment (\<df["x"] = ...>) rather than element-wise updates
(\<df.loc[0, "x"] = 99>). The reason is illustrated in
Figure~\ref{fig:mutation-example}: if cell \cell{C} modifies a single
element and is later deleted, the mutation persists in the DataFrame
but no remaining cell can undo it. Full-column assignment avoids this
problem because re-executing the cell restores the entire column to
its computed value.
This restriction applies to all element-wise DataFrame operations,
including \<df.loc>, \<df.iloc>, \<df.at>, and methods with
\<inplace=True>. In practice, most data transformations naturally use
full-column operations; element-wise updates are typically confined to
initialization cells that can simply be rerun.

The remainder of this section outlines the most critical aspects of
mapping the formal model of Section~\ref{sec:analysis} onto a correct
and performant analysis for Python.

%% ---------------------------------------------------------------------
\subsection{Instrumented Cell Execution}
\label{sec:mutation}
%% ---------------------------------------------------------------------

The store $\Sigma$ in the formal model includes both top-level
variable bindings and also DataFrame columns.
The top-level bindings appear in the kernel's namespace \<user\_ns>,
and DataFrame columns are accessed through a well-defined API for
which \flowbook{} can identify the columns read or written.  Thus, to
effectively construct $R_i$ and $W_i$, \flowbook inserts wrappers
around global namespace accessors and DataFrame methods to intercept
direct accesses to variables and columns.

Of course, entries in \<user\_ns> point to heap-allocated mutable data
beyond DataFrames (e.g., arrays, lists, and other objects) where reads
and writes via Python code, external libraries, or even native code are invisible to
simple namespace tracking.

\flowbook{} conservatively assumes that a read of a global variable
\texttt{x} in \<user\_ns> may also implicitly read any data structure
reachable from \texttt{x}, and so adding \texttt{x} to $R_i$ correctly
over-approximates all reads.  To identify writes, \flowbook{}
checkpoints the \<user\_ns> before cell execution via an efficient
deep copy (using copy-on-write and other optimizations where
possible), and then diffs this checkpoint with the post-execution
\<user\_ns> to identify all modified data structures.

To guarantee reproducibility, \flowbook{} does not permit later
partial modification of data structures allocated in an earlier cell,
for the same reason that partial column updates are forbidden, as
previously illustrated in Figure~\ref{fig:mutation-example}. Instead,
our rerun consistent methodology requires that any data structure
already allocated by an earlier cell execution should not be modified
by a later cell execution.  Any attempted later mutation is an
\emph{invalid mutation}, and our experimental results in
Section~\ref{sec:evaluation} below show that invalid mutations are
easily fixable.

\flowbook{} can only detect mutations and reproducibility violations
after executing the cell.  When an error occurs, \flowbook{}
leverages the checkpoint to rollback.
Thus, cell execution proceeds in three phases, as outlined in
the pseudocode in Figure~\ref{fig:kernel}:
\begin{enumerate}
  \item \textbf{Pre-execution:} \flowbook{} checkpoints the current
    namespace.
  \item \textbf{Execution:} \flowbook{} runs the cell under
    instrumentation that records variable and DataFrame column reads and writes.
  \item \textbf{Post-execution:} (1) \flowbook{} diffs the checkpoint
    against the current namespace to detect mutations.  Those
    locations must all be in the recorded $W_i$, or the kernel
    rolls back and reports an invalid mutation.
    (2) \flowbook{} then checks the rerun consistency predicates with
    the observed $R_i$ and $W_i$.  If they do not hold, the kernel
    rolls back to the checkpoint and reports a reproducibility
    violation; otherwise, the instrumentation state is updated and
    staleness is propagated.
\end{enumerate}

% ---------------------------------------------------------------------
\subsection{DataFrame Attributes}
\label{sec:locations}
%% ---------------------------------------------------------------------

Data science
workflows commonly inspect DataFrame metadata attributes to,
for example, check dimensions before a join, iterate over column
names, or validate index structure. These attributes are distinct from
the column locations and not mutable, but they do reveal structural
information about the DataFrame that may result in rerun consistency
violations.
For example, reading \texttt{df.columns} to iterate over column names
conflicts with any subsequent cell that adds or removes columns via
drop, assign, or direct column assignment. Similarly, reading
\texttt{df.index}, whether explicitly or implicitly through operations
like \texttt{df.loc[\textquotesingle b\textquotesingle]}, conflicts
with cells that call \texttt{reset\_index} or otherwise modify the
index structure.  We extend our formal model to track DataFrame attributes
(\texttt{shape}, \texttt{columns}, \texttt{index}, \texttt{len}, etc.)
as distinct locations and
augment the validity checks to consider conflicts between attribute
accesses and other accesses to a DataFrame or its columns.

%% ---------------------------------------------------------------------
\subsection{Creating Checkpoints}
\label{sec:checkpointing}
%% ---------------------------------------------------------------------

\flowbook{} creates checkpoints using efficient, specialized
deep-copying that exploits properties of typical notebook code and
data science libraries.
Checkpoints preserve pointer aliases between
objects, so that if $x$ and $y$ are aliases for the same object, then
$x$ and $y$ will refer to the same object in the
checkpoint.  We outline the most salient aspects of our checkpoint
implementation below.

\myparagraph{Non-deep-copyable objects and state}
Certain objects cannot be deep-copied.  These include modules, class
definitions, generators, iterators, file handles, and matplotlib
figures. \flowbook{} excludes these from checkpoints and produces a
visible warning indicating that a variable could not be checkpointed.
To preserve analysis soundness, \flowbook{} prevents all subsequent reads 
of such a variable.  That scenario is rare.  In our experiments
(Section~\ref{sec:evaluation}), only a single cell in our benchmark
suite was impacted by this limitation.  

\myparagraph{pandas copy-on-write semantics}  For pandas DataFrame
objects, \flowbook{} checkpoints initially contain shallow copies,
relying on pandas' copy-on-write semantics to share underlying memory
buffers until they are mutated.  Care must be taken to ensure columns
containing mutable objects are copied, rather than shared, to avoid
aliasing issues.  This optimization makes the time and space cost of
checkpointing this heavily used data structure close to zero, with
copying costs only incurred when columns are actually modified during
cell execution.  The same approach is used for pandas \texttt{Series}.

\myparagraph{Immutable library structures}  
We exploit several key properties of specific, widely-used machine
learning and data science libraries to speed up checkpointing.
For example, deep learning frameworks such as Keras~\cite{2018ascl.soft06022C} and PyTorch~\cite{DBLP:conf/nips/PaszkeGMLBCKLGA19} wrap
millions of immutable internal objects representing the model
architecture or internal object graph around a comparatively small set
of mutable parameters.  \flowbook{} copies only the mutable
components, such as weights, biases, and optimizer state, and shares
the underlying memory for immutable components.  Other frameworks,
such as LightGBM~\cite{DBLP:conf/nips/KeMFWCMYL17}, expose large immutable structures that cannot change
after training.  \flowbook{} recognizes these structures and
avoids duplicating them.
  
While \flowbook{}'s implementation leverages immutability in some of
the most commonly used libraries in the data science and machine
learning ecosystem, more comprehensive support for other libraries,
perhaps through a plugin system or via static analysis,
remains for future work.

\myparagraph{GPU-aware copying}  In most cases, \flowbook{} need not consider
GPU memory state when checkpointing because libraries that compute on
GPUs typically do not leave user-accessible state on the GPU between
cell executions or provide their own GPU-aware deep-copying mechanism.  
However, \flowbook{} does provide support for
GPU-accelerated DataFrames (via RAPID's cuDF package~\cite{rapids2026}).  These
DataFrames live entirely on the GPU and are preserved across cell
boundaries.  For those objects, \flowbook{} checkpoints the GPU
DataFrame directly on the GPU to avoid costly GPU-to-CPU data transfers.
If GPU memory is limited, \flowbook{} can fall back to CPU-based
checkpoints at the the cost of slower execution.

\subsection{Checkpoint Comparison to Identify Mutation}
After cell execution, \flowbook{} diffs the 
namespace against the pre-execution checkpoint to identify invalid mutations.
\flowbook's diff algorithm employs both structural and value-based comparison to handle
aliased objects correctly.  Further, for value comparisons, \flowbook{} uses a
combination of techniques: DataFrames use column-level comparison;
\texttt{Series}, arrays, and other iterables use element-wise comparison;
objects and other non-iterables use recursive comparison of the
attributes stored in their \texttt{\_\_dict\_\_}. Each of these
comparisons has been optimized for the
particular use cases most common in notebooks.
Another key optimization restricts diffs to only explore memory
reachable from global read or written by the cell, since all locations
mutated during cell execution must be reachable from one of those root
locations.

\subsection{File System Locations and State} Cells that read or write
files introduce dependencies on external state that may lead to
reproducibility violations. \flowbook{} interposes on \texttt{open}
and related I/O functions to record accessed file paths and include
them in the cell’s read or write set.  \flowbook{} uses a virtual file
system layer to support checkpoints, comparison, commits to disk, and
rollback of modified files, fully integrated their treatment into the
analysis.
This approach may naturally extend to other mutable external resources
as well. For example, database writes can execute within a transaction
that commits only on successful completion, and API calls can be
journaled and replayed. \flowbook{} currently tracks file accesses;
database and API integration remain future work.

\subsection{Supporting ``Diagnostic'' Cells}

Often, cells are used to inspect the state of the notebook before a
later global variable mutation for debugging, sanity checking, or
exploratory visualization.  To support this idiom, \flowbook{} allows
cells to be marked as diagnostic using the \texttt{\%diagnostic} magic
command.  \flowbook{} performs no analysis on those cells and offers
no rerun consistency guarantees for diagnostic cells, so their output
should never be relied upon.

%% file: nbs/mutation-example.tex
\begin{nbnotebook}{(a) Element-wise DataFrame mutation \\ Run \cell{A}; Run \cell{B}; Run \cell{C}; Run \cell{D}}
\begin{nbcode}{1}{@A}
\begin{lstlisting}
df = pd.DataFrame({"x": [0, 1]})
\end{lstlisting}
\end{nbcode}

\begin{nbcode}{2}{@B}
\begin{lstlisting}
df.loc[0, "x"] = 99
\end{lstlisting}
\end{nbcode}

\begin{nbcode}{3}{@C}
\begin{lstlisting}
df.loc[1, "x"] = 88
\end{lstlisting}
\end{nbcode}

\begin{nbcode}{4}{@D}
\begin{lstlisting}
print(df["x"].tolist())
\end{lstlisting}
\end{nbcode}

\begin{nboutput}{4}
[99, 88]
\end{nboutput}
\end{nbnotebook}

\medskip

\begin{nbnotebook}{(b) Deletion leads to unrecoverable state \\ Delete \cell{C}; Run \cell{D}}
\begin{nbcode}{1}{@A}
\begin{lstlisting}
df = pd.DataFrame({"x": [0, 1]})
\end{lstlisting}
\end{nbcode}

\begin{nbcode}{2}{@B}
\begin{lstlisting}
df.loc[0, "x"] = 99
\end{lstlisting}
\end{nbcode}

\vspace{2.1em}

\begin{nbcode}{5}{@D}
\begin{lstlisting}
print(df["x"].tolist())
\end{lstlisting}
\end{nbcode}

\begin{nboutput}{5}
[99, 88] \ \ \textcolor{red}{\# top-to-bottom: [99, 1]}
\end{nboutput}
\end{nbnotebook}

\medskip

\begin{nbnotebook}{(c) \flowbook{} requires full-column assignment \\ Run \cell{A}; Run \cell{B}}
\begin{nbcode}{1}{@A}
\begin{lstlisting}
df = pd.DataFrame({"x": [0, 1]})
\end{lstlisting}
\end{nbcode}

\begin{nbcodeerror}{2}{@B}
\begin{lstlisting}
df.loc[0, "x"] = 99
\end{lstlisting}
\end{nbcodeerror}
\begin{nberror}
Use \texttt{df["x"] = ...} for full-column assignment
\end{nberror}

\end{nbnotebook}

%% file: evaluation.tex
%%%%%%%%%%%%%%%%%%%%%%%%%%%%%%%%%%%%%%%%%%%%%%%%%%%%%%%%%%%%%%%%%%%%%%%%
\section{Evaluation}
\label{sec:evaluation}
%%%%%%%%%%%%%%%%%%%%%%%%%%%%%%%%%%%%%%%%%%%%%%%%%%%%%%%%%%%%%%%%%%%%%%%%

Our evaluation addresses three research questions:
\begin{itemize}
\item \textbf{RQ1:} Is the \flowbook{} analysis efficient enough for
  interactive use?
\item \textbf{RQ2:} How prevalent are rerun consistency violations in
  real-world notebooks?
\item \textbf{RQ3:} Can the rerun consistency violations reported by
  \flowbook{} be readily addressed?
\end{itemize}

To answer these questions, we constructed a benchmark suite from the
Kaggle Playground Series competitions for January--June, 2025.  These
are monthly machine learning competitions open to all
users~\cite{kaggle_playground_series}. Each competition involves a
supervised learning task on tabular data. Submitted 
notebooks implement complete ML pipelines including data loading,
exploratory analysis, feature engineering, model training, and
prediction.

We selected the top-voted public notebooks from each competition that
met the following inclusion criteria: (1)~the notebook uses only
publicly available data and code; (2)~the notebook executes without
error in a standardized virtual environment with fixed Python library
versions supported by \flowbook for many of the most commonly-used
data science libraries\footnote{We did, however, fix several pandas
operations that would behave incorrectly when executed in the presence
of copy-on-write semantics, which was disabled by default in the
Kaggle environment.}; and (3)~the baseline execution time is under one
hour.  These criteria ensure reproducible experimental conditions
while excluding notebooks that depend on private datasets or pose
challenges to our experimental set up.

The resulting benchmark comprised 61 notebooks spanning a
representative range of programming expertise, ML techniques, and
library usage. Notebooks employed production libraries including pandas~\cite{mckinney2010},
NumPy~\cite{DBLP:journals/corr/abs-2006-10256},
RAPIDS~\cite{rapids2026}
scikit-learn~\cite{DBLP:journals/jmlr/PedregosaVGMTGBPWDVPCBPD11},
XGBoost~\cite{DBLP:conf/kdd/ChenG16},
CatBoost~\cite{DBLP:conf/nips/ProkhorenkovaGV18},
LightGBM~\cite{DBLP:conf/nips/KeMFWCMYL17},
Keras~\cite{2018ascl.soft06022C}, and
PyTorch~\cite{DBLP:conf/nips/PaszkeGMLBCKLGA19}.
Notebooks contained between 4 and 54 code cells (median: 14, total: 999); 
cell run times varied from less than 1\,ms to 35 minutes (median: 97\,ms);
total notebook execution time varied from less than 2 seconds to 35 minutes (median: 62 seconds); and
peak notebook memory usage varied from less than 1MB to about 6GB (median: 150 MB).

Our experimental infrastructure ran notebooks via a headless
\flowbook{} driver on a compute server allocating four CPUs, one GPU
(NVIDIA RTX A6000), and 16GB memory to each notebook.  \flowbook{} was
configured to continue execution even after detecting violations.

%----------------------------------------------------------------------

\begin{figure}[tp!]
  \centering
  \includegraphics[width=\textwidth]{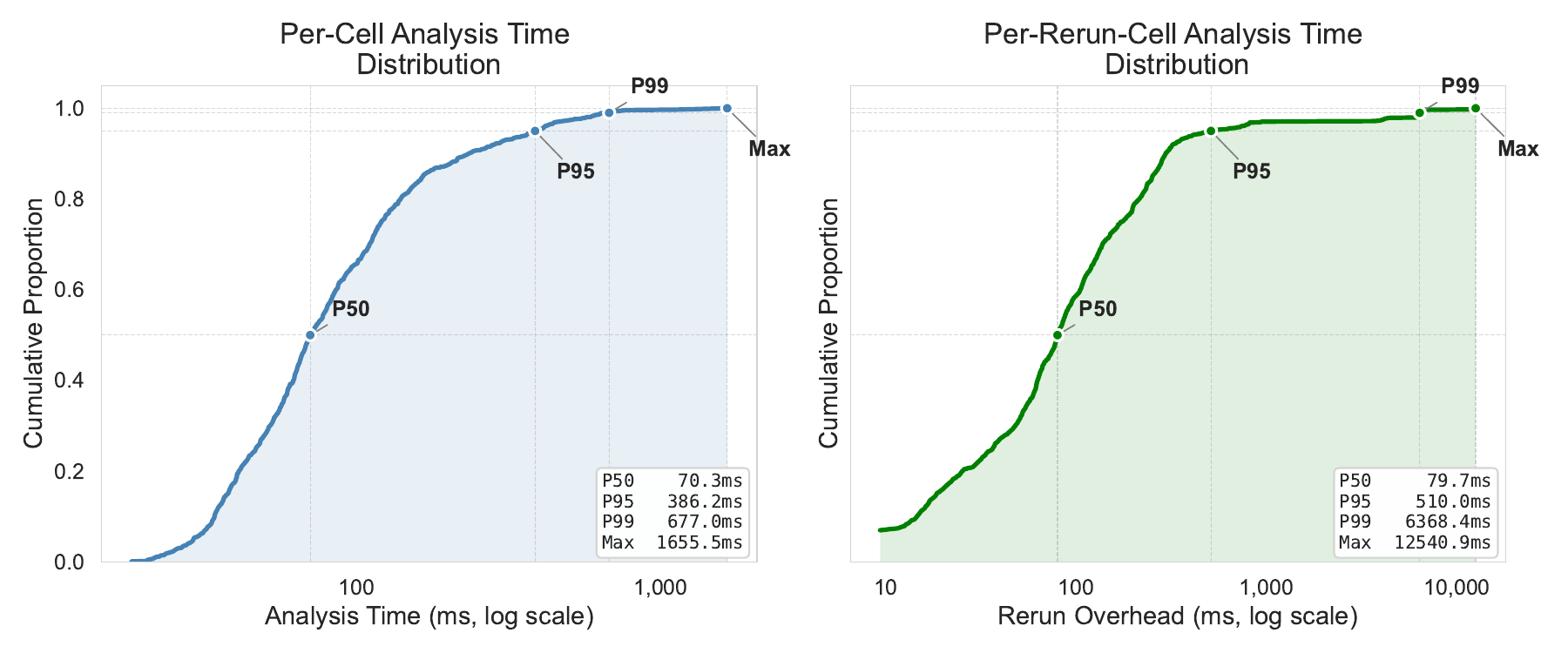}
  \caption{\label{fig:time}Cumulative distribution of per-cell time overhead for \flowbook{} analysis. (Left) When notebooks are run top-to-bottom, the median per-cell overhead is 70\,ms (P95 $=$ 386\,ms, P99 $=$ 677\,ms). Overhead is dominated by checkpointing at cell boundaries.  (Right) Rerunning cells does not significantly increase analysis time.}
% \vspace{-2em}
\end{figure}

\begin{figure}[tp!]
  \centering
  \includegraphics[width=\textwidth]{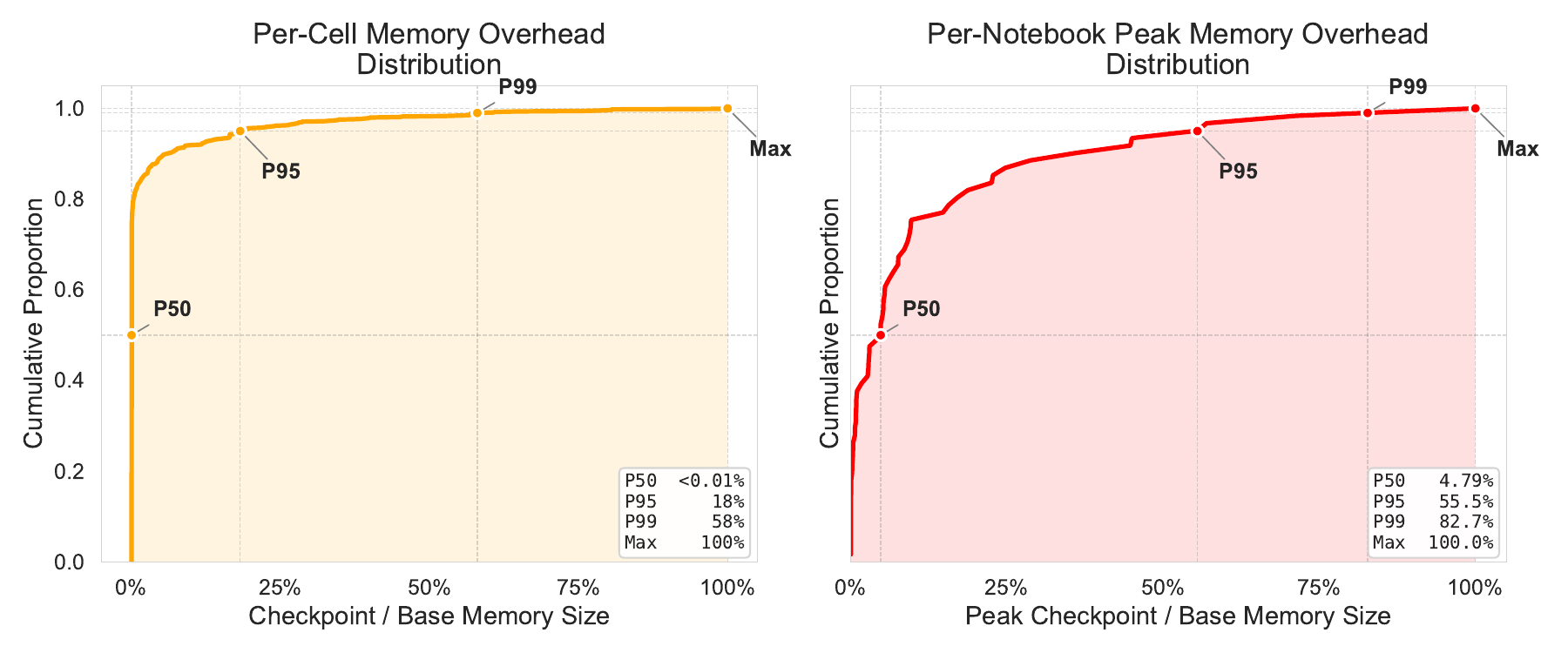}
  \caption{\label{fig:mem}Cumulative distribution of per-cell memory overhead for \flowbook{} checkpointing. (Left) The median overhead is 0\% (P95 $=$ 18\%, P99 $=$ 58\%). \flowbook{}'s optimized checkpointing mechanism leads to most cells incurring negligible memory cost. (Right) The median peak memory overhead for running a notebook is 5\% (P95 $= $55\%, P99 $=$ 82\%).}
%\vspace{-2em}
\end{figure}

%\vspace{-1.5em}
\subsection{RQ1: Performance Overhead}
\label{sec:eval-performance}

To be effective, \flowbook's analysis time and space overhead must be low 
enough to preserve interactivity in notebooks. % workflows. 

\myparagraph{Per-Cell Analysis Time Overhead} To assess the time
overhead of \flowbook's analysis, we measured the per-cell overhead
for all cells in all notebooks when executed from top-to-bottom.
Figure~\ref{fig:time} (left) shows the cumulative distribution of
per-cell overhead. The median (P50) overhead was 70\,ms, with P95 at
386\,ms and P99 at 677\,ms. 
% These analysis costs are primarily due to taking checkpoints at cell boundaries.
The analysis overhead was primarily due to taking checkpoints at cell boundaries.

% To contextualize these numbers, consider baseline Jupyter latency.
First, we compared these numbers with baseline Jupyter latency. 
Tests on various platforms demonstrated that even trivial cells
(\texttt{x = 1}) incur 100--200\,ms round-trip latency due to kernel
communication, message serialization, and frontend
rendering. \flowbook's 70\,ms median overhead was \emph{smaller} than
this baseline UI latency.% and thus imperceptible to users.

Second, we interpreted the per-cell time costs against findings on system usability. 
Research on human perception finds that response delays under
100\,ms seem instantaneous to humans~\cite{nielsen1994usability}. 
More recent work on interactive visualization finds that latencies under 500\,ms
preserve analysis quality~\cite{DBLP:journals/tvcg/LiuH14}. Therefore, the
latency incurred by leveraging \flowbook's dynamic analysis \emph{would have no 
impact on analysis behavior} for the vast majority of cells. Further, \emph{any latency is 
imperceptible in half of the cases}. 

% Cut for space:
% To quantify changes in perceived latency, we employed a model in which
% we assume the Jupyter UI adds approximately 150\,ms of latency to each cell's
% execution time. Under this model, 66.2\% of baseline cell executions
% and 56.2\% of \flowbook{} cell executions complete in under
% 500\,ms. An additional 10.4\% (baseline) and 20.1\% (\flowbook{}) fall
% between 500\,ms and 1\,s. %, remaining within the direct manipulation threshold. 
% The proportion of cells exceeding 1\,s is approximately 23\%
% for both systems. These cells are dominated by actual
% computation rather than analysis overhead. 
% That is, \flowbook{} shifts
% roughly 10\% of cells from ``instantaneous'' to ``noticeable but
% responsive,'' while leaving long-running cells unaffected.

\myparagraph{Per-Cell Analysis Time Overhead for Reruns}
The above only considers the time overhead of the initial analysis of 
a cell during a top-to-bottom execution.  To measure the time overhead
of re-running a cell, we conducted the following experiment: we first ran a notebook 
from top-to-bottom and then we repeatedly cycled
through and reran the cells at the top, 25\% of the way down, 50\% of
the way down, 75\% of the way down, and at the bottom of the notebook.
Figure~\ref{fig:time} (right) shows the cumulative distribution of time 
per-cell (P50 $=$ 80\,ms, P95 $=$ 510\,ms, P99 $=$ 6368\,ms).  
A small number of cells incurred significant overhead.  This 
is due to the cost of checkpointing and diffing a single matrix
containing more than a billion floating point numbers.  Augmenting
\flowbook{} with a rudimentary static dependency analysis would likely
enable \flowbook{} to avoid copying that matrix on each rerun.  

\myparagraph{Per-Cell and Peak Memory Overhead} Figure~\ref{fig:mem}
(left) shows the distribution of per-cell memory overhead. The 
overhead was calculated as the size of the checkpoint for a cell divided by the
size of the live state at the moment the cell was executed.  We
excluded cells with base memory below 0.1\,MB, as we found that heap
size measurements at this scale lack sufficient precision to yield
meaningful overhead ratios.

The median per-cell overhead was near 0\% (P95 $=$ 
18\%, P99 $=$ 58\%). Most cells incurred negligible memory cost because
of how \flowbook{} exploits copy-on-write semantics for DataFrames and immutability 
properties to minimize the amount of memory copied. 

In our measurements, memory size included both CPU memory and, if
applicable, GPU memory. In most cases, \flowbook{} did not allocate
additional GPU memory, since no GPU-resident state was visible to the
notebook between cell executions.  The one exception was when a cell used the 
RAPIDS library to store DataFrames directly on the GPU.  In
that case, \flowbook{} checkpointed the GPU DataFrame directly on the
GPU to avoid costly GPU-to-CPU data transfer.  As with CPU memory,
copy-on-write semantics for DataFrames and immutability properties
minimize the amount of memory copied on the GPU.  The three notebooks
using RAPIDS incurred a 0--50\% GPU memory overhead.  While
\flowbook{} GPU support was sufficient for our preliminary
experiments, GPU-resident DataFrames and data structures may become
more common in the future and necessitate a more thorough examination
of GPU-specific optimizations.

Figure~\ref{fig:mem} (right) shows the distribution of peak
memory overhead per notebook. This overhead was the maximum heap
footprint for the notebook during execution under \flowbook{} relative
to the baseline.  The median peak overhead was near 5\%, and P95 is
55.5\%.  The worst-case notebooks involve large NumPy arrays that do
not benefit from copy-on-write sharing, including the massive matrix
mentioned above.  Overall, peak memory overhead remains acceptable in
virtually all cases.

\vspace{-0.5em}
\smallskip
\begin{tcolorbox}[colback=green!5!white,colframe=green!75!black, arc=1mm, top=1mm, bottom=1mm, left=1mm, right=1mm]
\textbf{RQ1 Takeaway:} \flowbook{} imposes imperceptible or nearly imperceptible time overhead for
interactive use. 
% Per-cell time overhead (median 70\,ms) is smaller than baseline Jupyter latency. 
Per-cell memory overhead (P50 $=$ 0\%) is also negligible in all but the most extreme
memory-intensive cases.
\end{tcolorbox}

%----------------------------------------------------------------------

\begingroup
\footnotesize
\begin{table}[t]
  \caption{Rerun consistency violation categories with examples and
    recommended localized fixes.}
  \label{tab:repro-errors}
  \centering
  \begin{tabular}{>{\raggedright\arraybackslash\bfseries}m{2.5cm}>{\raggedright\arraybackslash}m{10cm}}
  \toprule
  \textbf{Category} & \textbf{Description and Fix} \\
  \midrule
  \rowcolor{gray!12}
  In-place variable reassignment &
    \vspace{4pt}A cell reads and overwrites the same variable, making re-runs accumulate changes.
    \par\vspace{2pt}\textit{Example:} \texttt{train = pd.concat([train, extra])}.
    \par\vspace{2pt}\textit{Fix:} Write to a new variable at each step (e.g., \texttt{train\_combined = ...})\vspace{4pt} \\[4pt]
  Invalid mutation &
    \vspace{4pt}A cell modifies an existing state through operations inconsistent with location tracking.
    \par\vspace{2pt}\textit{Example:} \texttt{df.drop(columns=[`id'], inplace=True)}, or \texttt{model.fit(X, y)} in a cell after \texttt{model}'s allocation.
    \par\vspace{2pt}\textit{Fix:} Use non-mutating alternatives (e.g., \texttt{df = df.drop(...)}), apply mutations to fresh copies of the original variables, or merge cells related to initializing an object's internal mutable state.
    \vspace{4pt} \\[4pt]
  \rowcolor{gray!12}
  Sequential transformation chain &
    \vspace{4pt}A transformation depends on the result of a transformation above it.
    \par\vspace{2pt}\textit{Example:} One cell imputes missing values into \texttt{df}; a cell below engineers features from \texttt{df} assuming clean data.
    \par\vspace{2pt}\textit{Fix:} Consolidate tightly coupled steps into a single cell or a named pipeline function.\vspace{4pt} \\[4pt]
  Reusing variable for different purposes &
    \vspace{4pt}A variable is reassigned to hold semantically different data.
    \par\vspace{2pt}\textit{Example:} \texttt{model} used for scoring, then reassigned to a new model.
    \par\vspace{2pt}\textit{Fix:} Use distinct variable names for different purposes. \\
  \rowcolor{gray!12}
  Diagnostic inspection before mutation &
    \vspace{4pt}A read-only inspection cell captures a pre-transformation state.
    \par\vspace{2pt}\textit{Example:} A cell calls \texttt{df.info()}; a cell below modifies \texttt{df}.
    \par\vspace{2pt}\textit{Fix:} Move inspection calls after the transformation, or use \texttt{\%diagnostic} to mark the cell.
    \vspace{4pt} \\[4pt]
  \bottomrule
  \end{tabular}
\end{table}
\endgroup

\begin{figure}[tp!]
  \begin{minipage}{0.65\textwidth}\vspace{0pt}
    \centering
    \includegraphics[width=\textwidth]{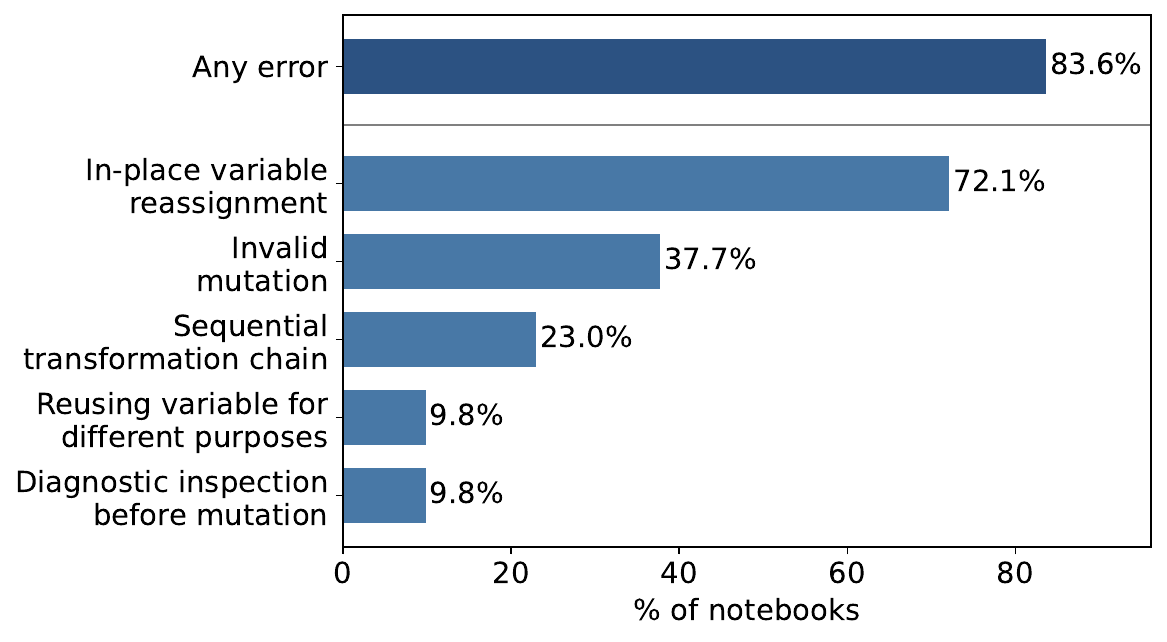}
  \end{minipage}
  \quad
  \begin{minipage}{0.3\textwidth} \vspace{-4ex}  \caption{\label{fig:error-categories}
      Rerun consistency violations are pervasive in our Kaggle benchmark suite, but the violations are easily fixed through one or more local code changes.
      Reproducibility errors fall into five categories (\autoref{tab:repro-errors}).
      Bars show the percentage of notebooks containing at least one instance of each category. Categories are not mutually exclusive. Most notebooks exhibit multiple violation types.}
  \end{minipage}
\end{figure}

\subsection{RQ2: Rerun Consistency Violations in Real-World Notebooks}
\label{sec:eval-violations}

\flowbook{} identified rerun consistency violations in 51 of 61
notebooks when notebook cells were run top-to-bottom, without
stopping for rerun consistency violations.
Across all notebooks, the dynamic analysis flagged 180 cells as violating the
rerun consistency. That is, \flowbook{} was unable to
guarantee that those cells would produce their recorded outputs if re-executed.
This finding confirms that rerun consistency
failures are pervasive in real notebook
code.  This finding corroborates prior
findings~\cite{DBLP:conf/chi/KeryRAJM18,rule2018exploration,pimentel2021understanding,10.1093/gigascience/giad113,DBLP:conf/icse/HuangRHTLC25,quaranta2022collabNotebooks}. 

We found five categories of cell violations (Table~\ref{tab:repro-errors}).
The first four categories represent genuine
reproducibility hazards where repeated cell execution produces
different results. 
The last, \emph{Diagnostic inspection before updates}, involves read-only cells
that capture pre-transformation state. These violations are less
severe in that re-execution may produce different output but will not 
corrupt the store $\Sigma$.
Figure~\ref{fig:error-categories} shows how many notebooks had one or more occurrence of each kind of violation.
We detected one false positive cell.
A notebook created a
matplotlib plot in a way that required sharing a non-copyable \texttt{Axes}
object between two cells, as described in
Section~\ref{sec:implementation}. We addressed this false positive 
by merging the two cells.

The high violation rate indicates that current notebook practice routinely employs patterns
that undermine reproducibility. The violations can be readily mitigated, as we demonstrate next.

% \vspace{-0.5em}
\smallskip
\begin{tcolorbox}[colback=green!5!white,colframe=green!75!black, arc=1mm, top=1mm, bottom=1mm, left=1mm, right=1mm]
\textbf{RQ2 Takeaway:} Rerun consistency violations are pervasive, affecting 84\%
of notebooks in our benchmark suite,
with 180 cells flagged across 61 notebooks. The violations cluster
into five categories: in-place reassignment and invalid mutation are the most common.
\flowbook{} is able to identify threats to reproducibility caused by hidden state or execution order dependencies. 
\end{tcolorbox}

%----------------------------------------------------------------------
\subsection{RQ3: Repairing Irreproducible Notebooks}
\label{sec:eval-repair}

We probed further into the nature of the violations to determine
whether they represent fundamental incompatibilities with \flowbook's
programming model.  We developed an \emph{automated program
  repair}~\cite{DBLP:journals/cacm/GouesPR19} tool that equipped a
Claude Code~\cite{claudecode} agent with a specification of the
reproducibility analysis and notebook manipulation primitives.  We instructed the agent to
categorize and fix the rerun inconsistencies reported by \flowbook{}.

The tool successfully analyzed and generated program patches for
the 51 problematic notebooks in less than an hour.
A handful of notebooks that failed to run properly after one
round of program repair required two additional
repair iterations. Two notebooks required non-local
refactorings, such as adding a parameter to a function used
throughout a notebook. In some cases, alternative fixes would be
more appropriate that the ones chosen (e.g., merging a sequence
of transformation cells rather than inserting deepcopy operations
between each step), but the fixes did mitigate all violations.

Importantly, \emph{none of the fixes required restructuring of
the notebook's computational logic or adversely affect notebook
behavior or performance.}  Indeed, the repaired notebooks ran in
time indistinguishable from the original notebooks, demonstrating
that even simple fixes to ensure rerun consistency do not
significantly impact notebook performance.  These findings
suggest that the violations represent fixable anti-patterns to
rerun consistency rather than fundamental incompatibilities with
\flowbook's programming model.

This initial study is promising, and further improving automated
rerun consistency repairs remains an interesting avenue for
future work for the moment.

\smallskip
\begin{tcolorbox}[colback=green!5!white,colframe=green!75!black, arc=1mm, top=1mm, bottom=1mm, left=1mm, right=1mm]
\textbf{RQ3 Takeaway:} An automated repair tool grounded in
\flowbook{} successfully fixed the 51 notebooks within an hour.
All violations could be addressed with local fixes that do not change
the logic or performance of the original notebooks.  These results
indicate that rerun consistency is a viable methodology for developing
reproducible notebooks.
\end{tcolorbox}

%% file: related.tex
\begin{figure}[tp!]
\noindent
\begin{minipage}[t]{0.48\linewidth}\vspace{0pt}
\input{nbs/ipyflow-not-reproducible.tex}
\end{minipage}
\begin{minipage}[t]{0.48\linewidth}\vspace{0pt}
\input{nbs/ipyflow-in-flowbook.tex}
\end{minipage}
\caption{\label{fig:ipyflow-not-reproducible}
An example showing that IPyflow~\cite{DBLP:journals/pvldb/ShankarMCHP22} fails to ensure rerun consistency.
IPyflow permits the execution order 
$\textsf{@A} \rightarrow \textsf{@C} \rightarrow \textsf{@B}$ in which cell \textsf{@C} adds column \texttt{b} to the \texttt{df} prior to running cell \textsf{@B}.
\flowbook{} prevents this execution order by identifying cell \textsf{@C} as violating rerun consistency.}
\end{figure}

\begin{figure}[tp!]\noindent
\begin{minipage}[t]{0.48\linewidth}\vspace{0pt}
\input{nbs/marimo-not-reproducible.tex}
\end{minipage}
\begin{minipage}[t]{0.48\linewidth}\vspace{0pt}
\input{nbs/marimo-in-flowbook.tex}
\end{minipage}
    \caption{\label{fig:marimo-not-reproducible}
An example showing that Marimo~\cite{marimo2026} fails to ensure rerun consistency.
Marimo permits the execution order
$\textsf{@D} \rightarrow \textsf{@E} \rightarrow \textsf{@E} (\text{rerun}) \rightarrow \textsf{@F}$ in which cell \textsf{@E} mutates \texttt{df[\textquotesingle a\textquotesingle]} multiple times.
\flowbook{} prevents this execution order by identifying cell \textsf{@E} as violating rerun consistency.
}
\vspace{-1em}
\end{figure}

%%%%%%%%%%%%%%%%%%%%%%%%%%%%%%%%%%%%%%%%%%%%%%%%%%%%%%%%%%%%%%%%%%%%%%%%
\section{Related Work}
\label{sec:related}
%%%%%%%%%%%%%%%%%%%%%%%%%%%%%%%%%%%%%%%%%%%%%%%%%%%%%%%%%%%%%%%%%%%%%%%%

\myparagraph{Computational Notebooks and Reproducibility}
Computational notebooks, most notably Jupyter Notebook~\cite{jupyter},
are widely used for data analysis because they support rapid iteration
and exploration~\cite{DBLP:conf/chi/KeryRAJM18}. However, that same
flexibility undermines reproducibility and
reliability~\cite{DBLP:conf/chi/KeryRAJM18,DBLP:conf/chi/RuleTH18,pimentel2021understanding,10.1093/gigascience/giad113,DBLP:conf/icse/HuangRHTLC25,quaranta2022collabNotebooks}.
In a study of 1.4 million GitHub notebooks, Pimentel et al. found that
only 4\% produced identical results on
re-execution~\cite{pimentel2021understanding}.  Similarly, Wang et
al. found that 73\% of published notebooks fail to
reproduce~\cite{DBLP:conf/kbse/WangKLZ20}.
The root cause is the notebook execution model permitting
mutable shared state, non-linear execution, and implicit dependencies,
allowing analyses to appear reproducible when they are not.

Many approaches have been proposed to mitigate reproducibility
challenges, including a variety of user-interface and programming
model enhancements.  These approaches include grouping and hiding cells through
``cell folding''~\cite{rule2018cellFolding}; grouping and annotating
cells to support sensemaking~\cite{DBLP:conf/chi/ChattopadhyayPH20};
laying cells out two-dimensionally~\cite{harden2023twoDNotebooks};
end-user-defined scoping of variables inside
notebooks~\cite{DBLP:conf/chi/RawnC25}; and visualizing cell dependencies
in a notebook~\cite{wenskovitch2019albireo}. While these approaches improve
notebook organization and comprehension, they do not enforce
reproducibility.

\myparagraph{Dependency-based Approaches}
Several systems detect stale cells using static or dynamic dependency
analysis.  NODEBOOK~\cite{DBLP:journals/pvldb/ShankarMCHP22} tracks
dependencies dynamically to detect staleness. Dataflow
notebooks~\cite{koop2017dataflowNotebooks} generate dataflow graphs
from notebooks to help users understand cell dependencies.
Osiris~\cite{DBLP:conf/kbse/WangKLZ20} takes a post-hoc approach,
using dependency analysis to reconstruct a valid cell execution order
for published notebooks.  Code Gathering
Tools~\cite{DBLP:conf/chi/HeadHBDD19} uses static analysis to identify
code dependencies for notebook cleanup and post-hoc reorganization.
In contrast, \flowbook{} operates during interactive editing and
execution, preventing reproducibility violations from ever creeping
into notebooks, rather than attempting to repair them after the fact.

IPyflow~\cite{DBLP:journals/pvldb/ShankarMCHP22} tracks dependencies
dynamically to detect staleness. However, it does not guarantee
reproducibility. Figure~\ref{fig:ipyflow-not-reproducible}
illustrates this: the user executes cells in order \cell{A} $\rightarrow$ \cell{C}
$\rightarrow$ \cell{B}, so cell \cell{B}'s call to \texttt{df.sum()} observes
column \texttt{b} added by cell \cell{C}. IPyflow permits this execution and
reports the two-column sum. A top-to-bottom execution would run \cell{B}
before \cell{C}, so \texttt{df.sum()} would observe only column
\texttt{a}.  Thus, the recorded output does not match top-to-bottom
execution. \flowbook{} prevents this violation.

These approaches face fundamental dependency analysis tradeoffs among
expressiveness, analysis precision, and performance. Static analyses,
including those mentioned above, cannot track dependencies through
arbitrary library code and dynamic language features, and dynamic or
hybrid analyses incur overhead from instrumentation and may still miss
implicit state changes.  For example, the current version of IPyflow
restricts dynamic analysis of loops to only the first iteration to
mitigate high overheads.
In contrast, \flowbook{} uses light-weight
monitoring of reads and writes to global variables and DataFrame columns along with 
an efficient memory checkpointing system to check whether each cell's
recorded execution remains consistent with a clean top-to-bottom run.

\myparagraph{Reactive Dataflow Models for Notebooks}
Reactive notebook systems enforce dataflow execution models that
automatically re-execute cells when dependencies change.  Dataflow
systems exist for JavaScript~\cite{observablehq2026},
Julia~\cite{plutojl2026}, and
Python~\cite{marimo2026,guzharina2021reactive,hextech2026}, among others.
While these systems provide stronger ordering guarantees than standard
notebooks, none provide robust reproducibility guarantees for Python's
scientific ecosystem~\cite{arXiv:2511.21994}. Reactive systems that
infer dependencies may miss true dependencies, leading to incorrect or
irreproducible results. Figure~\ref{fig:marimo-not-reproducible}
illustrates this behavior: The user executes cell \cell{E} twice, each time multiplying
\texttt{df[`a']} by 10. Marimo's dependency analysis does not track changes to
mutable state and permits this~\cite{marimo2026}, producing a sum of 600,
which is not possible under a top-to-bottom run executing cell
\cell{E} only once. \flowbook{} prevents this violation: Cell \cell{E}
reads and writes \texttt{df['a']}, violating rerun consistency. In
contrast to reactive systems that provide correctness guarantees
only to the extent that their dependency inference is complete,
\flowbook's guarantee is independent of dependency inference for mutable state.

\myparagraph{Notebook State Management}
Prior work on notebook checkpointing has focused on state versioning
and execution backtracking.  ForkIt~\cite{DBLP:conf/chi/WeinmanDBD21}
supports branching and rollback by serializing the full Python
interpreter state using dill, a general-purpose serialization
library~\cite{dill}. While this approach enables exploration of
alternative execution paths, full-state serialization incurs
substantial overhead.  Kishu~\cite{DBLP:journals/pvldb/LiCSPS24}
reduces storage and restoration costs by tracking incremental kernel-state
changes between cell executions, but it still relies on expensive
serialization to materialize and compare checkpoints.  Both systems
focus on state versioning rather than reproducibility enforcement.

In contrast, \flowbook{} uses lightweight in-memory checkpointing as
the mechanism underlying its reproducibility analysis.  
By maintaining checkpoints as deep copies
of the notebook's global environment, \flowbook{} avoids serialization
entirely and exploits characteristics of common notebook workloads to
achieve low overheads (Section~\ref{sec:evaluation}).

\myparagraph{Consistency Models}
\flowbook's notion of rerun consistency and reproducibility is related
to consistency models in databases and distributed
systems~\cite{Terry2011ReplicatedData,eswaran1976notions,DBLP:journals/toplas/HerlihyW90}.
These models include linearizability~\cite{DBLP:journals/toplas/HerlihyW90},
which ensures that concurrent operations appear to execute atomically
at some instant between invocation and response, ordered by real time;
sequential consistency~\cite{DBLP:journals/tc/Lamport79}, which
relaxes this requirement so that operations must appear in some total
order consistent with each process's program order, but not
necessarily real-time order; and
serializability~\cite{eswaran1976notions}, which provides an analogous
guarantee for database transactions.
Whereas these consistency models reason about concurrent operations,
\flowbook{} reasons about consistency with a top-to-bottom execution
order in an interactive setting where users may execute cells in any
order.  Thus, it is most closely related to models such as read prefix
consistency~\cite{Terry2011ReplicatedData}, in which readers are
guaranteed to observe an ordered sequence of writes starting with the
first write to a data object.

%% file: nbs/ipyflow-not-reproducible.tex
\begin{nbnotebook}[0.95\textwidth]{\texttt{add\_column.ipynb}\\ Irreproducible behavior in IPyflow}
\begin{nbcode}{1}{@A}
\begin{lstlisting}
import pandas as pd
df = pd.DataFrame({
    'a': [1, 2, 3]
})
\end{lstlisting}
\end{nbcode}

\begin{nbcode}{3}{@B}
\begin{lstlisting}
df.sum()
\end{lstlisting}
\end{nbcode}

\begin{nboutput}{*}
a \quad\phantom{0}6 \\
b \quad 15 \\
\end{nboutput}

\begin{nbcode}{2}{@C}
\begin{lstlisting}
df['b'] = [4, 5, 6]
\end{lstlisting}
\end{nbcode}
\end{nbnotebook}

%% file: nbs/ipyflow-in-flowbook.tex
\begin{nbnotebook}[0.95\textwidth]{\texttt{add\_column.ipynb}\\ \flowbook{} prevents irreproducible behavior}
\begin{nbcode}{1}{@A}
\begin{lstlisting}
import pandas as pd
df = pd.DataFrame({
    'a': [1, 2, 3]
})
\end{lstlisting}
\end{nbcode}

\begin{nbcodeerror}{3}{@B}
\begin{lstlisting}
df.sum()
\end{lstlisting}
\end{nbcodeerror}

\begin{nberror}
\texttt{reads df[\textquotesingle b\textquotesingle]} already written by \cell{B} below
\end{nberror}

\begin{nbcode}{2}{@C}
\begin{lstlisting}
df['b'] = [4, 5, 6]
\end{lstlisting}
\end{nbcode}
\end{nbnotebook}

%% file: nbs/marimo-not-reproducible.tex
\begin{nbnotebook}[0.95\textwidth]{\texttt{mutate.ipynb}\\Marimo interaction leading to irreproducible behavior when cell \cell{E} is run twice}
\begin{nbcode}{1}{@D}
\begin{lstlisting}
import pandas as pd
df = pd.DataFrame({
  'a': [1, 2, 3]
})
\end{lstlisting}
\end{nbcode}

\begin{nbcode}{3}{@E}
\begin{lstlisting}
df['a'] = df['a'] * 10
\end{lstlisting}
\end{nbcode}

\begin{nbcode}{4}{@F}
\begin{lstlisting}
df['a'].sum()
\end{lstlisting}
\end{nbcode}

\begin{nboutput}{*}
600
\end{nboutput}
\end{nbnotebook}

%% file: nbs/marimo-in-flowbook.tex
\begin{nbnotebook}[0.95\textwidth]{\texttt{mutate.ipynb}\\ \flowbook{} prevents irreproducible behavior}
\begin{nbcode}{1}{@D}
\begin{lstlisting}
import pandas as pd
df = pd.DataFrame({
  'a': [1, 2, 3]
})
\end{lstlisting}
\end{nbcode}

\begin{nbcodeerror}{2}{@E}
\begin{lstlisting}
df['a'] = df['a'] * 10
\end{lstlisting}
\end{nbcodeerror}
\begin{nberror}
Both reads and writes \texttt{df[\textquotesingle a\textquotesingle]}
\end{nberror}

\begin{nbcode}{~}{@F}
\begin{lstlisting}
df['a'].sum()
\end{lstlisting}
\end{nbcode}
\end{nbnotebook}

%% file: conclusion.tex
%%%%%%%%%%%%%%%%%%%%%%%%%%%%%%%%%%%%%%%%%%%%%%%%%%%%%%%%%%%%%%%%%%%%%%%%
\section{Conclusion}
\label{sec:conclusion}
%%%%%%%%%%%%%%%%%%%%%%%%%%%%%%%%%%%%%%%%%%%%%%%%%%%%%%%%%%%%%%%%%%%%%%%%

We presented \flowbook, a dynamic analysis that enforces
reproducibility in computational notebooks. \flowbook{} maintains a
well-formedness invariant ensuring that when all cells are clean, the
notebook's visible outputs match those of a fresh top-to-bottom
execution. The analysis tracks read and write sets for each cell,
propagates staleness when executions invalidate recorded state, and
detects rerun consistency violations before they corrupt notebook
state.
\flowbook{} implements this analysis as a Jupyter plugin. Experimental results show that \flowbook{} is effective at enforcing
reproducibility in real-world notebooks without compromising
interactivity.

%% file: acknowledgements.tex
\ifarxiv

\section{Acknowledgments}

This material is based upon work supported by the National Science Foundation under Grant Nos. 2243636 and 2243637. Any opinions, findings,
and conclusions or recommendations expressed in this material are
those of the author(s) and do not necessarily reflect the views of
the National Science Foundation.

\section{Availability}

\flowbook{} is available at: \url{https://github.com/stephenfreund/FlowBook}.

\else

\section{Acknowledgments}
ChatGPT and Claude were utilized to generate sections of this work,
including text, tables, graphs, code, data, citations, etc.

\section{Data-Availability Statement}

The \flowbook{} system will be made publicly available upon publication.
It is a fully featured system with an IPython kernel and a Jupyter
notebook extension.  The source code will be released on GitHub and
distributed via PyPI. \flowbook{} will also be submitted to the Artifact
Evaluation Committee (AEC) to support reproducibility and independent
validation.  A preliminary version available at the time of submission can be accessed at: \url{https://anonymous.4open.science/r/FlowBook-4012}.

The notebook dataset used in this work is derived from publicly
available Kaggle notebooks. These notebooks have been curated and
packaged into a reusable repository to facilitate replication and
reuse in future research projects.  That repository will be made publicly available upon publication as well.

\fi

%% file: appendix.tex
\clearpage
\appendix

% Legacy macros (kept for compatibility)
\newcommand{\Eval}[4]{#1 \,,\, #2 \Downarrow #3 \,,\, #4}
                                        % C, Σ ⇓ Σ', O
\newcommand{\EvalFull}[6]{#1 \,,\, #2 \Downarrow #3 \,,\, #4 \,,\, #5 \,,\, #6}
                                        % C, Σ ⇓ Σ', O, R, W
\newcommand{\RunTo}[2]{#1 \xrightarrow{\Run(i)} #2}

%% =============================================================
%%  Preservation — Full Proof
%% =============================================================
\section{Preservation: Notebook Operations Preserve Well-Formedness}

In this appendix, we use $W_{i..j}$ to abbreviate
$W_i \cup \ldots \cup W_j$.

\noindent\textsc{Theorem~\ref{thm:preservation} (Preservation).}
\textit{If $S \cdot I$ is well-formed and
$S \cdot I \xRightarrow{op} S' \cdot I'$, then $S' \cdot I'$ is
well-formed.}

\begin{proof}
The proof proceeds by cases on the operation.

\begin{itemize}
\item \relname{Inst-Edit}.
$\Edit(i, c)$ sets $T'_i = \stale$ and leaves $R$ and $W$ unchanged.
Every cell that was $\clean$ in $I$ remains $\clean$ in $I'$ with the same witness.

\item \relname{Inst-Run}.
Suppose $S \cdot I = (C, O, \Sigma) \cdot (T, R, W)$ is well-formed and
\[
  S \cdot I \;\xRightarrow{\Run(i)}\; S' \cdot I' = (C, O', \Sigma') \cdot (T', R', W')
\]
where $R'$ and $W'$ agree with $R$ and $W$ except at position $i$.
We show $S' \cdot I'$ is well-formed.

Let $j$ be any index with $T'_j = \clean$.
We exhibit $\Sigma''$ such that
\[
  C_j;\, \Sigma' \,\Downarrow\, O'_j \cdot \Sigma'' \cdot R'_j \cdot W'_j
\]
with $\Sigma'$ and $\Sigma''$ agreeing except on $W_{j+1..n}$ and 
where $R'$ and $W'$ are rerun consistent for $j$.

The proof splits into three subcases.

%% ---- Subcase j < i ----
\begin{itemize}
\item \textsc{Subcase $j < i$.}
$\Run(i)$ sets $T'_i = \clean$ and may mark cells $\stale$ via
$\ForwardStale$ or $\BackwardStale$.  Since $T'_j = \clean$ and $j <
i$, cell $j$ was not marked $\stale$, so $T_j = \clean$.

Well-formedness of $S \cdot I$ yields $\Sigma'''$ such that
\[
  C_j;\, \Sigma \,\Downarrow\, O_j \cdot \Sigma''' \cdot R_j \cdot W_j
\]
where $\Sigma$ and $\Sigma'''$ agree except on $W_{j+1..n}$ and where
$R$ and $W$ are rerun consistent for $j$.

Since $j < i$, we have $R'_j = R_j$ and $W'_j = W_j$.
$\WriteBeforeRead(R', W', j)$ holds because $W'_{1..j-1} =
W_{1..j-1}$.  $\NoReadBeforeWrite(R', W', j)$ holds: any violation
would require $R_j \cap W'_i \neq \emptyset$, but then the
$\NoWriteAfterRead(R', W', i)$ for $\Run(i)$ would fail.

The following diagram commutes:
\[
\begin{array}{ccc}
  \Sigma & \xrightarrow{\;C_j\;} & \Sigma''' \\[4pt]
  {\scriptstyle C_i}\downarrow \phantom{\scriptstyle C_i} & & \downarrow \\[4pt]
  \Sigma' & \xrightarrow{\;C_j\;} & \Sigma''
\end{array}
\]
The stores $\Sigma$ and $\Sigma'$ differ only on $W'_i$.
Since $R_j \cap W'_i = \emptyset$,
running $C_j$ from $\Sigma'$ yields the same behavior as from $\Sigma$.
Thus $\Sigma'$ and $\Sigma''$ agree except on $W_{j+1..n}$.
Hence $S' \cdot I'$ is well-formed.

\item \textsc{Subcase $j = i$.}
The rule \textsc{Inst-Run} sets $T'_i = \clean$.
The antecedent of the rule establishes that $R'$ and $W'$ are rerun consistent for $i$.

The evaluation $C_i;\, \Sigma \,\Downarrow\, O'_i \cdot \Sigma' \cdot
R'_i \cdot W'_i$ is given by the rule.  $\NoReadAndWrite(R', W', i)$
implies cell $i$ does not read its own writes.  Thus
$C_i;\, \Sigma' \,\Downarrow\, O'_i \cdot \Sigma' \cdot R'_i \cdot
W'_i$.  The store $\Sigma'$ agrees with itself except on
$W_{i+1..n}$.  Hence $S' \cdot I'$ is well-formed.

\item \textsc{Subcase $j > i$.}
Since $T'_j = \clean$ and $j > i$, cell $j$ was not marked $\stale$ by
$\ForwardStale(R', W', i, j)$.  Thus $W'_i \cap (R_j \cup W_j)
= \emptyset$ and $T_j = \clean$.

Well-formedness of $S \cdot I$ yields $\Sigma'''$ such that
\[
  C_j;\, \Sigma \,\Downarrow\, O_j \cdot \Sigma''' \cdot R_j \cdot W_j
\]
where $\Sigma$ and $\Sigma'''$ agree except on $W_{j+1..n}$ and where $R$ and $W$ are rerun consistent for $j$.

Since $j > i$, we have $R'_j = R_j$, $W'_j = W_j$, and $O'_j = O_j$, and 
rerun consistency of $R'$ and $W'$ at $j$ holds.
The following diagram commutes:
\[
\begin{array}{ccc}
  \Sigma & \xrightarrow{\;C_j\;} & \Sigma''' \\[4pt]
  {\scriptstyle C_i}\downarrow \phantom{\scriptstyle C_i} & & \downarrow \\[4pt]
  \Sigma' & \xrightarrow{\;C_j\;} & \Sigma''
\end{array}
\]
The stores $\Sigma$ and $\Sigma'$ differ only on $W'_i$.
Since $W'_i \cap R_j = \emptyset$, running $C_j$ from $\Sigma'$ yields the same result.
Thus $\Sigma'$ and $\Sigma''$ agree except on $W'_{j+1..n}$, and well-formedness holds.
\end{itemize}

\item \relname{Inst-Insert}.
Suppose $S \cdot I \xRightarrow{\Insert(i, c)} S' \cdot I'$.
The rule sets $T'_i = \stale$ with $R'_i = \emptyset$ and $W'_i = \emptyset$.
All other cells shift indices: for $k \geq i$, cell $k$ in $I$ becomes cell $k+1$ in $I'$.

Let $j$ be any index with $T'_j = \clean$.
Then $j \neq i$ (since $T'_i = \stale$).
If $j < i$, cell $j$ is unchanged: $T'_j = T_j$, $R'_j = R_j$, $W'_j = W_j$.
If $j > i$, cell $j$ was cell $j-1$ in $I$: $T'_j = T_{j-1}$, $R'_j = R_{j-1}$, $W'_j = W_{j-1}$.

Since $W'_i = \emptyset$, inserting at $i$ does not affect $\WriteBeforeRead$ or $\NoReadBeforeWrite$.
The set $W'_{j+1..n}$ equals $W_{j+1..n}$ (for $j < i$) or $W_{j..n}$ (for $j > i$),
plus $W'_i = \emptyset$.
The witness store from $I$ serves as the witness in $I'$.

\item 
\relname{Inst-Delete}.
Suppose $S \cdot I \xRightarrow{\Delete(i)} S' \cdot I'$.
The rule removes cell $i$ and marks cell $j$ $\stale$ if $\ReadsResidualWrite(R', W', i, j)$,
i.e., if $R_j \cap W_i \neq \emptyset$.
For $k > i$, cell $k$ in $I$ becomes cell $k-1$ in $I'$.

Let $j$ be any index with $T'_j = \clean$.
Then $R_j \cap W_i = \emptyset$ (otherwise $j$ would be $\stale$).
If $j < i$, cell $j$ is unchanged in $R$ and $W$.
If $j \geq i$, cell $j$ was cell $j+1$ in $I$.

$\WriteBeforeRead(R', W', j)$ holds because $R_j \subseteq W_{1..j-1}$ and we removed $W_i$ where $R_j \cap W_i = \emptyset$.
$\NoReadBeforeWrite(R', W', j)$ holds because
\[
\OverWritten(W', j) \subseteq \OverWritten(W, j)
\]
(We removed $W_i$.)
The witness store from $I$ serves as the witness in $I'$: cell $j$ did not read $W_i$,
so removing those writes does not affect the evaluation.
Well-formedness holds.

\item
\relname{Inst-Move}.
$\Move(s, d)$ is defined as $\Delete(s)$ followed by $\Insert(d', C_s)$,
where $d' = d-1$ if $s < d$ and $d' = d$ if $s > d$.
Preservation follows by composing the two cases above:
if $S \cdot I$ is well-formed, then $\Delete(s)$ yields a well-formed $S'' \cdot I''$,
and $\Insert(d', C_s)$ yields a well-formed $S' \cdot I'$.
\end{itemize}

\medskip\noindent
All cases establish well-formedness of $S' \cdot I'$.
\end{proof}

%% =============================================================
%%  Reproducibility — Full Proof
%% =============================================================
\section{Reproducibility: Well-Formed All-Clean States Are Reproducible}

\noindent\textsc{Theorem~\ref{thm:reproducibility} (Reproducibility).}
\textit{If $S \cdot I = (C, O, \Sigma) \cdot (T, R, W)$ is well-formed and
$T_i = \clean$ for all $i$, then there exists $\Sigma'$ such that
$\StdEvalNotebook{C}{O}{\Sigma'}$.}

\begin{proof}
The proof is by induction on $i$. Define
\[
  P(i) \;\defeq\; C_{1..i} \,\downarrow\, O_{1..i} \cdot \Sigma_i
  \quad\text{for some $\Sigma_i$ where $\Sigma$ and $\Sigma_i$ agree on } W_{1..i} \setminus W_{i+1..n}.
\]

\begin{itemize}
\item \textsc{Base case.}
$P(0)$ holds vacuously: $\varepsilon \,\downarrow\, \varepsilon \cdot \emptyset$.

\item \textsc{Inductive step.}
Suppose $P(i-1)$ holds.
Then $C_{1..i-1} \,\downarrow\, O_{1..i-1} \cdot \Sigma_{i-1}$
where $\Sigma$ and $\Sigma_{i-1}$ agree on $W_{1..i-1} \setminus W_{i..n}$.
We show $C_i;\, \Sigma_{i-1} \,\downarrow\, O_i \cdot \Sigma_i$.

Since $T_i = \clean$, well-formedness yields $\Sigma'$ such that
\[
  C_i;\, \Sigma \,\Downarrow\, O_i \cdot \Sigma' \cdot R_i \cdot W_i,
  \qquad
  \text{where $\Sigma$ and $\Sigma'$ agree except on } W_{i+1..n}.
\]
We have that:
\begin{align*}
  \NoReadAndWrite(R, W, i) & \implies R_i \cap W_i = \emptyset \\
  \WriteBeforeRead(R, W, i) &\implies R_i \subseteq W_{1..i-1} \\
  \NoReadBeforeWrite(R, W, i) &\implies R_i \cap W_{i..n} = \emptyset
\end{align*}
and:
\[
  R_i \subseteq W_{1..i-1} \setminus W_{i..n}
\]
By the inductive hypothesis, $\Sigma$ and $\Sigma_{i-1}$ agree on~$R_i$.
Hence running $C_i$ from $\Sigma_{i-1}$ yields the same output and write $W_i$ as running from $\Sigma$:
\[
  C_i;\, \Sigma_{i-1} \,\downarrow\, O_i \cdot \Sigma_i
\]
Thus $\Sigma_i$ and $\Sigma$ agree on $W_{1..i} \setminus W_{i+1..n}$,
establishing $P(i)$.
\end{itemize}
\end{proof}

%% =============================================================
%%  Progress Theorem — Full Proof
%% =============================================================
\section{Progress: Running Stale Cells Terminates}
\label{sec:progress-proof}

\noindent\textsc{Theorem~\ref{thm:progress} (Progress).}
\textit{If $S \cdot I$ is well-formed, then the strategy of repeatedly
running the first stale cell terminates in a state where either:
\begin{enumerate}[nosep]
\item $T_i = \clean$ for all $i$ (the notebook is reproducible by
  Theorem~\ref{thm:reproducibility}), or
\item the first stale cell $i$ is \emph{stuck}: there is no
  $S' \cdot I'$ such that
  $S \cdot I \xRightarrow{\Run(i)} S' \cdot I'$, either because the
  rerun consistency checks in \textsc{[Inst-Run]} fail or because the
  underlying cell evaluation produces an error.
\end{enumerate}}

\begin{proof}
Suppose the current state $S \cdot I = (C, O, \Sigma) \cdot (T, R, W)$
is well-formed and $i$ is the first stale cell
(i.e., $T_j = \clean$ for all $j < i$ and $T_i = \stale$).

If $i$ is stuck---that is, there is no successor state via
$\xRightarrow{\Run(i)}$, either because the rerun consistency checks in
\textsc{[Inst-Run]} fail or because the underlying cell evaluation
produces an error---then the theorem holds immediately.

Otherwise, $\Run(i)$ produces a successor state
\[
  S \cdot I \;\xRightarrow{\Run(i)}\; S' \cdot I' = (C, O', \Sigma') \cdot (T', R', W')
\]
where $T'_i = \clean$. By Theorem~\ref{thm:preservation},
$S' \cdot I'$ is well-formed.

If all cells above $i$ remain $\clean$ in $T'$, then the prefix of
clean cells has strictly increased (from $i-1$ to at least $i$), and
we repeat the strategy on the new state.

Otherwise, suppose there exists $j < i$ such that $T'_j = \stale$.
Then $j$ was marked stale via $\BackwardStale(W, W', i, j)$ for some
location $\ell \in W_i \setminus W'_i$. Since $j$ was $\clean$ in $T$,
well-formedness of $S \cdot I$ implies there exists $\Sigma''$ such that
\[
  \InstEvalCell{C_j}{\Sigma}{O_j}{\Sigma''}{R_j}{W_j}
\]
where $\Sigma$ and $\Sigma''$ agree except on $W_{j+1..n}$.  We know
$R_j \cap W_i = \emptyset$ and so running $C_j$ from $\Sigma'$
produces the same behavior (same output and read/write sets), and
therefore does not mark any earlier cell as stale (it may mark later
cells as stale). We rerun all stale cells $j \leq i$ in order; after
each run, the cell becomes $\clean$ without marking any earlier cell
as stale. Thus after processing all stale cells in $1, \ldots, i$, the
prefix of clean cells has increased from $i-1$ to at least~$i$.

Since the notebook has finitely many cells, the length of the clean
prefix strictly increases with each round of the strategy. The process
therefore terminates either in a state where all cells are $\clean$ or
at a stuck cell.
\end{proof}